\documentclass[12pt, draftclsnofoot, onecolumn]{IEEEtran}

\usepackage{setspace}

%\doublespacing
%\IEEEoverridecommandlockouts
\usepackage{subcaption}

\usepackage{multicol}
\usepackage{amsmath, amssymb, bm, cite, epsfig, psfrag, mathtools}
\usepackage{epstopdf}
\usepackage{changepage}
\usepackage{graphicx}
\usepackage{fancyhdr}
\usepackage{bbm}
\usepackage{algorithm,algpseudocode}
\usepackage{array}
\usepackage[margin=10pt,font=small]{caption}
\usepackage{bbm}
\usepackage{multirow}
\usepackage[usenames,dvipsnames]{xcolor}

\usepackage{etoolbox}
\usepackage{pbox}
\usepackage[width=0.48\textwidth]{caption}
\usepackage{colortbl}
%\usepackage{fancyhdr}
%\pagestyle{fancy}
%\fancyhead[C]{Confidential}
\usepackage{bm}
\usepackage{caption}
\captionsetup[table]{labelsep=space}
\usepackage{amsmath}
\usepackage{color}
\usepackage{enumitem}
\usepackage{datetime}
\usepackage{amsthm}
\usepackage{dsfont}
\usepackage{mathtools}
\usepackage[T1]{fontenc}

\newtoggle{conference}
\togglefalse{conference} % Use for arxiv version -- also change documentclass
\interdisplaylinepenalty=2500
%\usepackage[letterpaper,top=1in,bottom=0.75in, ,margin=0.75in]{geometry}
%\usepackage[letterpaper,left=0.75in,right=0.75in,top=0.75in,bottom=0.75in,footskip=.25in]{geometry}
%\graphicspath{{figures/}}

\newcommand{\eqdef}{\stackrel{\Delta}{=}}

\def\beq{\begin{equation}}
\def\eeq{\end{equation}}
\def\beqa{\begin{eqnarray}}
\def\eeqa{\end{eqnarray}}
\def\beqan{\begin{eqnarray*}}
\def\eeqan{\end{eqnarray*}}

\def\argmin{\mathop{\mathrm{arg\,min}}}
\def\argmax{\mathop{\mathrm{arg\,max}}}

\newtheorem{remark}{Remark}

\newtheorem{definition}{Definition}
\newtheorem{theorem}{Theorem}

\newtheorem{corollary}{Corollary}

\setlength{\unitlength}{1mm}

% Colors

%\def\PL{\mbox{{\small \sffamily PL}}}

\def\tm1{t\! - \! 1}
\def\tp1{t\! + \! 1}

\def\dbf{\mathbf{d}}

\def\fbf{\mathbf{f}}

\def\pbf{\mathbf{p}}

\def\qbf{\mathbf{q}}

\def\xbf{\mathbf{x}}

\def\Cbf{\mathbf{C}}

\def\Qbf{\mathbf{Q}}

\newcommand{\Vc}{{\cal V}}

\def\Xi{{\text{Eff}} }

\def\factor{\tau}
\def\length{\Omega}
\def\cache{\mu}

\def\order{\chi}

\def\ach{ { ach } }
\usepackage{units}

\setlength{\belowcaptionskip}{-2pt}
\captionsetup[figure]{width=.95\textwidth}
\bibliographystyle{IEEEtran}
\begin{document}

% **************************************** TITLE PAGE **************************************************
\title{{Rate-Distortion-Memory Trade-offs in Heterogeneous Caching Networks}}

\author{\normalsize{
Parisa Hassanzadeh, Antonia M. Tulino, Jaime Llorca, Elza Erkip
   %  Eldon Tyrell\IEEEauthorrefmark{4}
   	\thanks{This work has been supported in part by NSF under grant \#1619129 and in part by NYU WIRELESS Industrial Affiliates Program. This paper was presented in part at Allerton 2015 \cite{hassanzadeh2015distortion}.}}
%\thanks{P. Hassanzadeh  and  E. Erkip are with the ECE Department of New York University, Brooklyn, NY. Email: \{ph990, elza\}@nyu.edu}
%\thanks{J. Llorca  and A. Tulino are with Bell Labs, Nokia, Holmdel, NJ, USA. Email:  \{jaime.llorca, a.tulino\}@nokia.com}
%\thanks{A. Tulino is with the DIETI, University of Naples Federico II, Italy. Email:  \{antoniamaria.tulino\}@unina.it}
}

\maketitle

% ********************************************************************************************************
% ABSTRACT
% ********************************************************************************************************
\vspace{-0.75in}
\begin{abstract}
%Mobile network operators are considering 
Caching at the wireless edge can be used to keep up with the increasing demand for high-definition wireless video streaming. By prefetching popular content into memory at wireless access points or end-user devices, requests can be served locally, relieving strain on expensive backhaul. In addition, using network coding allows the simultaneous serving of distinct cache misses via common coded multicast transmissions,  resulting in significantly larger load reductions compared to those achieved with traditional delivery schemes. %Most prior works do not exploit the properties of video and simply treat content as fixed-size files that users would like to fully download. 
Most prior works  simply treat video content as fixed-size files that users would like to fully download. This work is motivated by the fact that video can be coded in a scalable fashion and that the decoded video quality depends on the number of layers a user receives in sequence. Using a Gaussian source model, caching and coded delivery methods  
are designed to minimize the squared error distortion at end-user devices in a rate-limited caching network.  The framework is very general and accounts for heterogeneous cache sizes, video popularities and user-file play-back qualities. As part of the solution, a new decentralized scheme for lossy cache-aided delivery subject to preset user distortion targets is proposed, which further generalizes prior literature to a setting with file heterogeneity.  
 \end{abstract}

\begin{IEEEkeywords} %Caching networks, coded multicast, scalable coding, successive refinement, distributed lossy source coding
Caching networks, coded multicast, scalable coding, successive refinement,  lossy source coding 	
\end{IEEEkeywords}

%Caching is considered as one of the strategies to keep up with the increasing demand for high-definition wireless video streaming. By prefetching popular content into memory at wireless access points or end-user devices, requests can be served locally, relieving strain on expensive backhaul. In addition, using network coding allows the simultaneous serving of distinct cache misses via common coded multicast transmissions,  resulting in significantly larger load reductions compared to those achieved with traditional delivery schemes. Most prior works simply treat video content as fixed-size files that users would like to fully download. Our work is motivated by the fact that video can be coded in a scalable fashion and that the decoded video quality depends on the number of layers a user receives in sequence. Using a Gaussian source model, caching and coded delivery methods are designed to minimize the squared error distortion at end-user devices in a rate-limited caching network.  Our framework is very general and accounts for heterogeneous cache sizes, video popularities and user-file play-back qualities. As part of our solution, a new decentralized scheme for lossy cache-aided delivery subject to preset user distortion targets is proposed, which further generalizes prior literature to a setting with file heterogeneity.  

% ********************************************************************************************************
% INTRODUCTION
% ********************************************************************************************************
 
\section{Introduction}~\label{sec: Introduction}
With the recent explosive growth in cellular video traffic, wireless operators are heavily investing in making infrastructural improvements such as increasing base station density and offloading traffic to Wi-Fi.
Caching is a technique to reduce traffic load by exploiting the high degree of asynchronous content reuse and the fact that storage is cheap and ubiquitous in today's wireless devices~\cite{molisch14caching}. During off-peak periods when network resources are abundant, popular content can be stored at the wireless edge,  %(e.g., access points or end user devices),
 so that peak hour demands can be met with reduced access latencies and bandwidth requirements.
 
The simplest form of caching is to store the most popular video files at every edge cache~\cite{wang14cache}. Requests for popular cached files can then be served locally, while cache misses need to be served by the base station, achieving what is referred to as a local caching gain. However,  replicating the same content on many devices can result in an inefficient use of the aggregate cache capacity~\cite{golrezaei12femtocaching}.
In fact, recent studies~\cite{maddah14fundamental, maddah14decentralized,  ji15order,yu2017characterizing} have shown that making users store different portions of the video files creates coded multicast opportunities that enable a global caching gain. In \cite{yu2017characterizing}, the memory-rate trade-off for the worst-case and average demand is characterized within a factor of two of an information theoretic lower bound for uniformly popular files. Caching networks have been extended to various settings including setting with random demands \cite{niesen2017coded,ji15order}, online caching \cite{pedarsani2016online}, noisy channels \cite{bidokhti2018noisy}, and correlated content \cite{ITjournal,JSAC2018}. A comprehensive review of existing work on caching networks can be found in \cite{paschos2018role}.

%The case of random demands according to a arbitrary popularity distribution is analyzed in \cite{niesen2017coded,ji15order,zhang2018coded}, where the authors propose schemes that are able to achieve performance within a constant factor of the optimal performance.  The caching network has been studied for various settings including multiple per-user demands \cite{ji2015caching}, online caching \cite{pedarsani2016online}, in interference channels \cite{maddah2019cache,naderializadeh2017fundamental}, noisy channels \cite{bidokhti2018noisy,amiri2018caching}, and settings with correlated content \cite{hassanzadeh2017rate,ITjournal,JSAC2018}.  A comprehensive review of existing work on caching networks can be found in \cite{paschos2018role}.

While existing work on wireless caching is motivated by video applications, the majority do not exploit specific properties of video in the caching and delivery phases. The cache-aided delivery schemes available in literature are based on fixed-to-variable source encoding, designed to minimize the aggregate rate on the shared link so that the requested files are recovered  in a lossless manner \cite{maddah14fundamental,  ji15order,yu2017characterizing,maddah14decentralized}. However, all video coders allow for lossy recovery \cite{wang2002video}. In particular, in scalable video coding (SVC)~\cite{SVC}, video files are encoded into layers such that the base layer contains the lowest quality level and additional enhancement layers allow successive improvement of the video streaming quality. SVC strategies are especially suitable for heterogeneous wired and wireless networks, since they encode video into a scalable bitstream such that video reconstructions of different spatial and temporal resolutions, and hence different qualities, can be generated by simply truncating the scalable bitstream. This scalability accommodates network requirements such as bandwidth limitations, user device capability, and quality-of-service restrictions in video streaming applications \cite{sun2007overview}.

In this work, we consider a lossy cache-aided network where the caches are used to enhance  video  reconstruction quality at user devices. We consider a scenario in which users store compressed files at different encoding rates (e.g., video layers in SVC). Upon delivery of requests, depending on the available network resources, users receive additional layers that successively refine the reconstruction quality. By exploiting scalable compression, we investigate the fundamental limits in caching networks with throughput limitations. We allow users to have different preferences in reconstruction quality for each library file, and assume that files have possibly different distortion-rate functions. These assumptions further account for the diversity of multimedia applications being consumed in wireless networks (e.g., YouTube videos vs 3D videos or augmented reality applications), and with respect to requesting users' device capabilities (e.g., 4K vs 1080p resolution).  Our goal is to design caching  schemes that, for a given broadcast rate, minimize the average distortion experienced at user devices.

\subsection{Related Work}
%We study a lossy caching network where receivers are connected to the sender through a shared rate-limited link, and the caching and delivery strategies are designed to minimize the expected distortion across the network  for a given broadcast rate. 
As discussed above, most literature on caching considers lossless recovery of files with the goal of minimizing the total rate transmitted over the shared link, in order to recover all requested fixed-size  video files in whole \cite{maddah14fundamental, maddah14decentralized, ji14average, ji15order,yu2017characterizing}.   There are only a few works that study the lossy cache-aided broadcast network \cite{timo2016rate,yang2018coded,ibrahim2018coded}.  In \cite{timo2016rate} the authors study the delivery rate, cache capacity and reconstruction
distortion trade-offs in a network with arbitrarily correlated sources for the single-user network and some special cases of a two-user problem. Similarly to  this paper,  \cite{yang2018coded}  and \cite{ibrahim2018coded}  assume successively refinable sources in a setting where receivers have heterogeneous distortion requirements. In \cite{yang2018coded}, the authors study the problem of minimizing the worst-case delivery rate for Gaussian sources and heterogeneous distortion requirement at the users. They characterize the optimal delivery rate for the two-file two-user case, and propose efficient centralized and decentralized caching schemes based on successive refinement coding for the general case.  The work in \cite{ibrahim2018coded}  extends  \cite{yang2018coded} to a setting where the server not only designs the users' cache contents, but also optimizes their cache sizes subject to a total memory budget.

\subsection{Contributions}
Our work differs from \cite{timo2016rate,yang2018coded,ibrahim2018coded} in a number of ways. Compared to \cite{timo2016rate} which considers a single-cache network, we have a large network with arbitrary number of receivers, each equipped with a cache memory of different capacity. The works in \cite{yang2018coded,ibrahim2018coded} minimize the worst-case rate transmitted over the broadcast link for a set of predetermined reconstruction distortion requirements at each user, while we minimize the expected distortion across the network subject to a given broadcast rate for a more general setting as elaborated below. Our main contributions are summarized as follows:

\begin{enumerate}
 \item We formulate the problem of efficient lossy delivery of sources over a heterogeneous rate-limited broadcast caching network via information-theoretic tools, and study the trade-off between user cache sizes, broadcast rate and the expected reconstruction distortion across users and demands. We allow for sources to have different distortion-rate functions, %(variances in the case of Gaussian sources),  
  and for users to have different cache sizes and different  demand distributions. %As mentioned earlier, the works in \cite{yang2018coded,ibrahim2018coded} consider fixed reconstruction distortion requirements at each user for all files in the library, and minimize the delivery rate. As a means to studying the rate-distortion-memory trade-off in our setting, we also provide a solution to a generalized version of the problem considered in \cite{yang2018coded}, which is further explained in Contribution 5.

\item We propose a class of cache-aided delivery schemes, in which, to limit the computational complexity and reduce the communication overhead, the sender only takes into account users' local cached content during the delivery phase, and generates the transmit message independently for each receiver without exploring multicast coding opportunities.   We refer to this scheme, presented in Sec.~\ref{sec: Unicast}, as the \textit{Local Cache-aided Unicast (LC-U)} scheme.  We show that the optimal caching policy in LC-U  admits a reverse water-filling type solution, which can be implemented locally and independently across users, without the need of global coordination. %The content cached at each receiver provides a low quality representation of each video stream. Then, for each file request the sender  effectively computes the optimal delivery rates allocated to each user just based on the quality level available at the corresponding local cache.

\item We propose another class of   schemes in Sec.~\ref{sec: Multicast} referred to as the \textit{Cooperative  Cache-aided  Coded Multicast (CC-CM)} scheme. In CC-CM,  the sender designs the caching and delivery phases jointly across all receivers based on global network knowledge (user cache contents and demand distributions, and file rate-distortion functions), and compresses the files accordingly. In this scheme global network knowledge is used to fill user caches and to construct codes that fully exploit the multicast nature of a wireless system.  %In Sec.~\ref{sec:Simulations}, we  numerically show that CC-CM  offers notable performance improvements over LC-U in terms of average file reconstruction distortion. %In this case, the optimal policy requires joint optimization of the compression rates at which files are cached by each user. After users place their requests, the sender, with knowledge of the users' cache contents, computes a common multicast codeword that simultaneously delivers additional enhancement layers to each user.

\item In Sec. \ref{sec:RAP}, we present a coded delivery scheme that can be adopted by CC-CM to implement the caching phase and to deliver a portion of the multicast message. We refer to this scheme, which is a generalization of the scheme proposed in \cite{ji15order} to a setting with heterogeneous cache sizes, demand distributions, and where users are interested in receiving possibly {degraded versions (different-length portions) of a given file  in the library}, as the {\em Random Fractional caching with Greedy Constrained Coloring (RF-GCC)}. We provide upper bounds on the per-demand and average delivery rates achieved with  RF-GCC. We note that RF-GCC allows the generalization of the problem studied in \cite{yang2018coded} where (i) files have different distortion-rate functions, and (ii) users have different reconstruction distortion targets for each library file. When specialized to the setting in \cite{yang2018coded}, our results  show that RF-GCC achieves equal or better worst-case delivery rate compared to the decentralized scheme proposed in \cite{yang2018coded}. %{\RED We don't really have a plot for this in our paper}

\item In Sec. \ref{sec:RAP-GCCOptimization}, we describe how  RF-GCC  presented in Sec. \ref{sec:RAP} can be used to deliver part of the transmitted message in  CC-CM, introduced in Sec. \ref{sec: Multicast}. The remaining part is delivered via unicast, and based on these two components we characterize the rate-distortion-memory trade-off achieved with CC-CM. In Sec.~\ref{sec:Simulations}, we  numerically show that CC-CM  offers notable performance improvements over LC-U in terms of average file reconstruction distortion.

% \item \BLUE As a means to designing the CC-CM scheme,  we solve a generalization of the problem studied in \cite{yang2018coded} to (i) files having different distortion-rate functions, and (ii) users having different reconstruction distortion targets for each library file. In Sec.~\ref{sec:RAP}, we propose an achievable scheme based on a generalization of the scheme proposed in \cite{ji15order} to a setting with heterogeneous cache sizes, demand distributions, and where users are interested in receiving possibly {degraded versions (different-length portions) of a given file  in the library}. For this scheme, we derive an upper bound on the rate-memory trade-off for any given user demand combination as well as for the expected rate-memory trade-off, and use the results to solve the main problem of interest in this paper.  When specialized to the setting in \cite{yang2018coded}, our results  show that our proposed scheme achieves equal or better worst-case delivery rate compared to the decentralized scheme proposed in \cite{yang2018coded}. {\RED We don't really have a plot for this in our paper}

%\item \BLUE In Sec.~\ref{sec:RAP-GCCOptimization}, we discuss how the RF-GCC proposed in Sec.~\ref{sec:RAP} can be adopted by the CC-CM scheme to fill receiver caches and to deliver the requested content through coded multicast transmissions, based on which we characterize the rate-distortion-memory trade-off achieved with CC-CM. 

\end{enumerate}

%\subsection{Paper Organization}
%This paper is organized as follows. The system model is presented in Sec. \ref{sec: ProblemSetting}. Sec. \ref{sec: Unicast} introduces LC-U, an achievable scheme based on  local caching and unicast transmissions, and Sec. \ref{sec: Multicast} proposes CC-CM, a scheme based on cooperative caching and coded multicast transmissions, and its performance is analyzed in Sec.~\ref{sec:RAP-GCCOptimization}. Finally, Sec. \ref{sec:Simulations} presents numerical results that illustrate the achievable distortion-memory trade-offs discussed in the paper.

%{\em Notation:} The set of integers $\{1,\dots,n\}$ is denoted by $[n]$.  We use $\{x_{i,j} \}$ in short to denote the entire set of elements $x_{i,j}$, for all $i\in\mathcal A$ and all $j\in\mathcal B$, and $(x)^+$ is used to denote $\max\{x,0 \}$. We use $i\ni \xbf$ to indicate that $i$ is one of the elements of vector $\xbf$.

\section{System Model and Problem Statement}~\label{sec: ProblemSetting}
\vspace{-1cm}
\subsection{Source Model}\label{subsec:source}
Consider a library composed of $N$ independent files indexed by $\{1,\dots,N \} \triangleq[N]$ and generated by an $N$-component  memoryless  source (N-MS) over finite alphabets $\mathcal W_1,\dots,\mathcal W_N$ with a  pmf  $p(w_1,\dots,w_N)=$ $ p(w_1),\dots, p(w_N)$. For a block length $F$, file $n\in[N]$ %\footnote{$[n]$ denotes the  discrete set of integers	from $1$ to $n$, i.e., $[n]\triangleq\{1,\dots,n \}$.} 
is represented by a sequence $W_n^F = (W_{n1},\dots, W_{nF})$, %, referred to as a file $n$ hereafter, 
where  $W_n^F \in {\mathcal W}_n^F $. For  a given reconstruction alphabet $\widehat{\mathcal W}_n$, an estimate of file $W_n^F$, $n\in[N]$, is represented by  ${\widehat W}_n^F\in\widehat{\mathcal W}_n^F$, and the distortion between the file and its reconstruction is measured  by a single letter distortion function $D_n: \mathcal W_n \times \widehat{\mathcal W}_n \rightarrow {\mathbb R}^{+}$, as $D_n( {W}_{n}^F,\widehat{W}_{n}^F) = \frac{1}{F} \sum\limits_{i=1}^F D_n( {W}_{n,i},\widehat{W}_{n,i})$.

%\begin{align}
%D_n( {W}_{n}^F,\widehat{W}_{n}^F) = {1}/{F} \sum\limits_{i=1}^F D_n( {W}_{n,i},\widehat{W}_{n,i}) . \label{eq: file dist}
%\end{align}
 We consider successively refinable sources, as defined in \cite{SuccessiveRefin}, where each source can be compressed in multiple stages such that the optimal distortion is achieved at each stage without incurring rate loss relative to its single-description representation.   Specifically, in the case of two stages, consider a first description of the file $W_n^F$  compressed  at rate $R^{(1)}$ bits/source-sample incurring distortion $D^{(1)}$, and an additional description that is compressed at rate $R^{(2)}-R^{(1)}$ bits/source-sample, such that the reconstruction resulting from the two-stage description has distortion $D^{(2)} \leq D^{(1)}$. Then,   the underlying N-MS is successively refinable if it is possible to construct codes such that $D^{(1)} = D(R^{(1)})$ and $D^{(2)} = D(R^{(2)})$, where $D(R)$  denotes the source  distortion-rate function. This suggests that the descriptions at each stage are optimal and the distortion-rate limit at both stages can be simultaneously achieved. 

Without loss of generality,  %for ease of exposure and analytical tractability, 
and based on the fact that Gaussian sources with squared error distortion are successively refinable, we assume that the source distribution is Gaussian %source $W_n^F$ consists of $F$ i.i.d. Gaussian samples 
with variance $\sigma_{n}^{2}$ and distortion-rate function $D_{n}(r) = \sigma_{n}^{2} 2^{-2r}$\cite{Cover}. Note that the compression setting considered in this paper is applicable to a video streaming application, in which each file represents a video segment compressed using SVC \cite{SVC}. In SVC, the single-stream video is encoded into multiple components, referred to as {\em layers}, such that the scalable video content is a combination of one {\em base layer} and multiple additional {\em enhancement layers}.  The base layer contains the lowest spatial, temporal and quality representation of the video, while  enhancement  layers can improve the quality of the video file reconstructed at the receiver. Note that an enhancement layer is useless unless the receiver has access to the base layer and all preceding enhancement layers. The reconstructed video quality (distortion) in SVC depends on the total number of layers received in sequence. 

\subsection{Cache-Aided Content Distribution Model}
Consider a cache-aided broadcast system, where one sender (e.g., base station) is connected through an error-free  rate-limited shared link to $K$ receivers (e.g., access points or user devices).  
The sender has access to a content library generated by an $N$-MS source as described in Sec.~\ref{subsec:source}. Receiver $k\in[K]$ has a cache of size $M_{k}$ bits/sample, or equivalently, $M_{k}F$ bits, as shown in  Fig.~\ref{fig: System}. Receiver $k\in[K]$ requests files from the library independently according to demand distribution $\qbf_k = (q_{k,1},\dots,q_{k,N})$, assumed to be known at the sender, where $q_{k,n} \in [0,1]$ for all $n\in[N]$, $\sum_{n=1}^N q_{k,n} = 1$, and  $q_{k,n}$ denotes the probability that  receiver  $k$ requests  file  $n$. 

 The cache-aided content distribution system operates in two phases: 
 \begin{itemize}
	\item[(i)]{{\em Caching Phase:}}
	%The caching 
	This phase occurs during a period of low network traffic. In this phase, all receivers have access to the entire library for filling their caches. Designing the cache content can be done locally by the receivers based on their local information, or globally in a cooperative manner either directly by the sender, or by the receiver itself based on information from the overall network. As in~\cite{maddah14fundamental, maddah14decentralized, ji14average,ji15order}, we assume that library files and their popularity change at a much slower time-scale compared to the file delivery time-scale. %, and we  neglect the resource requirements associated with the cache-update process. %Hence, the caching phase is sometimes referred to as the {\em placement} or {\em prefetching} phase.
		
	\item[(ii)]{{\em Delivery Phase:}}
	%This phase takes place after completion of the caching phase. During this phase, 
	After the caching phase, only the sender has access to the library and the network is repeatedly  used in a time slotted fashion. At the beginning of each time slot, the sender is informed of the demand realization vector,  denoted by $\dbf = (d_1,\dots,d_K) \in \mathfrak D\equiv [N]^K$, where $d_k \in[N]$  denotes the index of the file requested by receiver $k\in [K]$. 
	
\end{itemize}

\begin{figure}[t]
	\centering
	\includegraphics[width=0.5\linewidth]{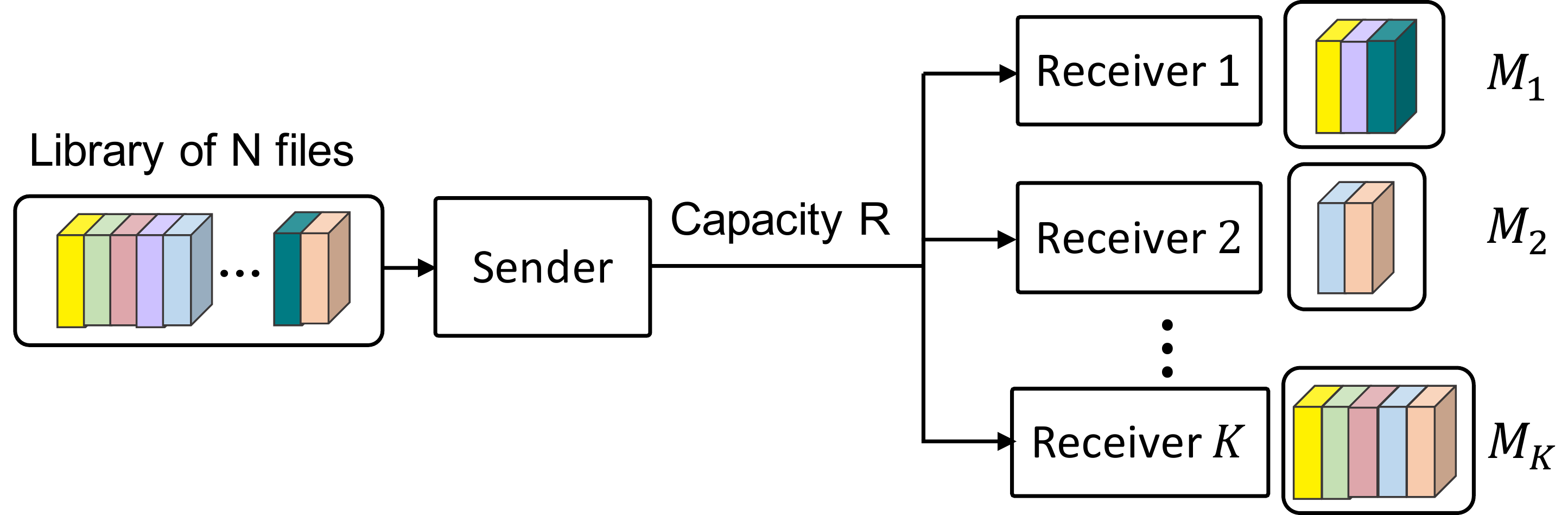}
	\caption{Caching is used for reducing the distortion of requested content in a broadcast network.}
	\label{fig: System}
\end{figure}

The goal of this paper is to design caching and delivery strategies, referred to as {\em caching schemes}, that result in the lowest expected distortion across the network, taken over the source distribution and demand distributions, under the condition that the rate (measured in bits/sample as defined in \eqref{eq: demand rate}) required to satisfy the demand is within a given rate budget $R$, for given receiver cache capacities $M_1,\dots,M_K$.  As a result, when a file from the library is requested, we allow for different versions of the file, encoded at different rates and with different reconstruction distortions, to be delivered to the receivers.

More formally, the caching scheme is composed of the following components:
\begin{itemize}
	\item{\bf Cache Encoder:} 
	The cache encoder at the sender computes the content to be cached at receiver $k\in[K]$, denoted by $Z_k$, using a function $f^{\mathfrak  C}_{k}:  \prod\limits_{n=1}^N{\mathcal W}_n^{F} \rightarrow [1: {  2^{M_kF}}  )$ as $Z_{k} = f^{\mathfrak  C}_{k}\Big( \{W_n^F\}_{n=1}^N\Big) $.
	
	\item{\bf Multicast Encoder:} During the delivery phase, the sender is informed of the demand realization $\mathbf{d} =(d_1, \ldots, d_K)\in\mathfrak D$. The sender uses the function $f^{\mathfrak  M}: {\mathfrak D}   \times \prod\limits_{n=1}^N{\mathcal W}_n^{F}\times \prod\limits_{k=1}^K [1: 2^{M_kF})   \rightarrow \mathcal Y^\star$ to compute and transmit a multicast codeword 
	$Y_{\dbf} = f^{\mathfrak  M}\Big( \dbf ,\, \{W_n^F\}_{n=1}^N,\,   \{Z_{k}\}_{k=1}^K  \Big) $, where we use $\star$ to denote variable length. % over the   link. %,  
	 %where $\mathcal Y^\star$ {\RED denotes the set of finite length sequences,  remove: satisfying the shared-link capacity $R$}. 
	
	\item{\bf Multicast Decoders:} Receiver $k\in[K]$ uses a mapping $g^{\mathfrak  M}_{k} :
	\mathfrak D \times \mathcal Y^\star \times [1: 2^{M_kF}) \rightarrow  \widehat{\mathcal W}_{d_k}^F$ 
	to reconstruct its requested file using its cached content $Z_k$ and the received
	multicast codeword $Y_{\dbf}$, as $\widehat{W}_{d_{k}}^F = g^{\mathfrak  M}_{k}(\dbf, Y_{\dbf} ,Z_{k})$.
\end{itemize}
 For a given demand $\dbf\in\mathfrak D$, the rate transmitted over the shared link, $R_{\dbf}^{(F)}$,  is defined as
	\begin{align}
	R_{\dbf}^{(F)} =  \frac{ \mathbb{E}  [ L(Y_{\dbf})] }{F} ,    \label{eq: demand rate}
	\end{align}
	where $L(Y )$ denotes the length (in bits) of the multicast codeword $Y$,  and the expectation
	is over the source distribution. The expected distortion, over all demands, receivers and the source distribution, is defined as
	$D^{  (F)} =  \mathbb{E} \bigg[ \frac{1}{K}\sum_{k=1}^{K} D_{d_k}( {W}_{d_k}^F,\widehat{W}_{d_k}^F)\bigg]$,  
%\begin{align}
%  D^{  (F)} =  \mathbb{E} \bigg[ \frac{1}{K}\sum_{k=1}^{K} D_{d_k}( {W}_{d_k}^F,\widehat{W}_{d_k}^F)\bigg] ,    \label{eq: exp dist}
%\end{align} 
which is a function of the cached content $\{Z_k\}$ and the multicast codeword $Y_{\dbf}$.
%For example, the expected distortion of Gaussian sources  is given by
 For Gaussian sources we have
\begin{eqnarray} 
D^{(F)}  = %\sum_{ \dbf \in \mathfrak D } \Pi_{\bf d}
 \mathbb{E} \bigg[\frac{1}{K}  \sum_{k=1}^{K}\sigma_{{d}_{k}}^{2} 2^{-2  \,{ \Xi} (M_{k,d_k},\, R_{k, \dbf})}\bigg], \label{averDist}
\end{eqnarray}
where $M_{k,d_k}$ is the size (in bits/sample) of receiver $k$'s cache assigned to storing file $d_k$, $R_{k, \dbf}$  is the total rate (in bits/sample) delivered to receiver $k$ for demand $\dbf$, which we refer to as the {\em per-receiver} rate, and function  $\Xi(.)$ determines the  effective rate available to the receiver useful for  reconstructing its requested file $d_k$, which we refer to as the {\em effective rate function}. 
%$\mathfrak D$ is the set of possible demands, and $\Pi_{\bf d}$ is the probability of demand $\dbf$, i..e,  $\Pi_{\dbf} { \eqdef }\prod_{k=1}^K q_{k,d_{k}}$.

As shown in \cite{maddah14fundamental}, due to the broadcast nature of the wireless transmitter in cache-aided networks, by capitalizing on the spatial reuse of the cached information several different demands can be satisfied with a single coded multicast transmission, resulting in global caching gains. Therefore, for a general caching scheme, the overall rate received by receiver $k\in[K]$ in demand $\dbf$, i.e,. the per-receiver rate $R_{k,\dbf}$, can be different from the total rate multicasted over the shared link by the sender, $R^{(F)}_{\dbf}$. Furthermore, due to the successive refinability of the files, not all messages received and decoded by the receivers are useful for the reconstruction of requested files. Only the cached and received bits that are in sequence determine the reconstruction distortion, translating to the effective rate of $\Xi(M_{k,d_k}, R_{k, \dbf})$ bits/sample.

\begin{definition}\label{def:1}
For a given demand $\dbf$, a distortion-rate-memory tuple $(D,R,M_1,\dots,M_K)$ is {\em achievable} if there exists a sequence of  caching  schemes for cache capacities  $M_1,\dots,M_K$, and increasing file size $F$ such that $\limsup_{F\rightarrow \infty} {  D}^{(F)} \leq D$ and    $\limsup_{F\rightarrow \infty} R_{\dbf}^{(F)} \leq R$. 
\end{definition}

\begin{definition}
The distortion-rate-memory region $ \mathfrak R^*$ is the closure of the set of achievable distortion-rate-memory tuples $(D,R,M_1,\dots,M_K)$, and the optimal distortion-rate-memory 
function  is given by %, $D^*(R,\{M_k\}_{k=1}^K)$, is 
\begin{align}
  D^*(R,\{M_k\}_{k=1}^K) = \inf\Big\{  D: ( D,R,M_1,\dots,M_K) \in  \mathfrak R^*\Big\}.
\end{align}
 \end{definition}
 
 \noindent{Later in Sec.~\ref{sec:RAP-GCCOptimization}, in order to make the optimization problem in \eqref{eq: Het Opt} tractable we use the expected multicast rate	%given in Definition~\ref{def:1} 
 	rather than the multicast rate defined in \eqref{eq: demand rate}, defined as 
${\bar R}^{(F)} = \mathbb{E} [R_{\dbf}^{(F)}] $, 
where the expectation is over the demand distribution. 
}

% ********************************************************************************************************
% UNICAST TRANSMISSIONS
% ********************************************************************************************************

\section{Local Cache-aided Unicast (LC-U) Scheme}\label{sec: Unicast}
In this section, we present LC-U, an achievable scheme that despite its simplicity, serves as a benchmark for caching schemes that exploit coding opportunities during multicast transmissions, which are studied in Sec.~\ref{sec: Multicast}.  Furthermore,  LC-U is useful in refining the multicast content distribution scheme of Sec.~\ref{sec: Multicast}. LC-U determines  the content to be placed in each receiver cache and the multicast codeword that is transmitted over the shared link with rate budget $R$ (bits/sample) for each demand, independently across the receivers. The multicast encoder is equivalent to $K$ independent fixed-to-variable source encoders each depending only on the local cache of the corresponding receiver, resulting in $K$ unicast transmissions. Let ${\bf M}_k=(M_{k,1},\dots,M_{k,N})$ denote the {\em cache allocation} at receiver $k\in[K]$, i.e., the portion of memory designated to storing information from each file. LC-U operates as follows:
 
\begin{enumerate}
\item[(i)] Caching Phase: Receiver $k\in[K]$ computes the optimal cache allocation that minimizes the expected distortion across the network, assuming that it will not receive further transmissions from the sender, i.e., $R_{k, \dbf} = 0$ for any $\dbf\in\mathfrak D$.   Since receivers are not expecting to receive additional refinements during the delivery phase, each receiver caches content independently based on its own demand distribution. Receiver $k \in [K]$ solves the following convex optimization problem
\begin{equation}
\begin{aligned}
&\min
& &  \mathbb E\Big[D_{n} (W_{n}^F , {\widehat W}_{n}^F)\Big]  = \sum_{n=1}^{N} q_{k,n} \sigma_{n}^{2} 2^{-2M_{k,n}} \\
& \text{s.t}
& & \sum_{n=1}^{N} M_{k,n} \leq M_{k}  , \; \;\;M_{k,n} \geq 0, \;\; \forall n \in [N]  
%&&&  M_{k,n} \geq 0, \hspace{1cm} \; \forall n \in [N]  
\end{aligned}\label{eq: LCU-Cache}
\end{equation}
resulting in a cache allocation given as
\begin{equation}
M_{k,n}^{*}=  \left( \log_{2}\sqrt{\frac{2\ln({2}q_{k,n}\sigma_{n}^{2})}{\lambda_{k}^{*}}}\right)^{+}  \label{eq: LCU-CacheSol},
\end{equation}
with $\lambda_{k}^{*}$ such that $\sum_{n=1}^NM_{k,n}^{*}=M_{k}$, and $(x)^+$ is used to denote $\max\{x,0 \}$. The solution admits the well-known reverse water-filling form \cite{Cover}, in which receiver $k$ only stores portions of those files that satisfy $q_{k,n}\sigma_{n}^{2}\leq \frac{\lambda_{k}^{*}}{2\ln{2}}$; hence, $q_{k,n}M_{k,n}^{*} = \min\{\frac{\lambda_{k}^{*}}{2\ln{2}}, \; q_{k,n}\sigma_{n}^{2}\}$, as illustrated in Fig. \ref{fig: WaterFilling}.

 \begin{figure}[ht!]
 	\centering
 	\includegraphics[width=2.1in]{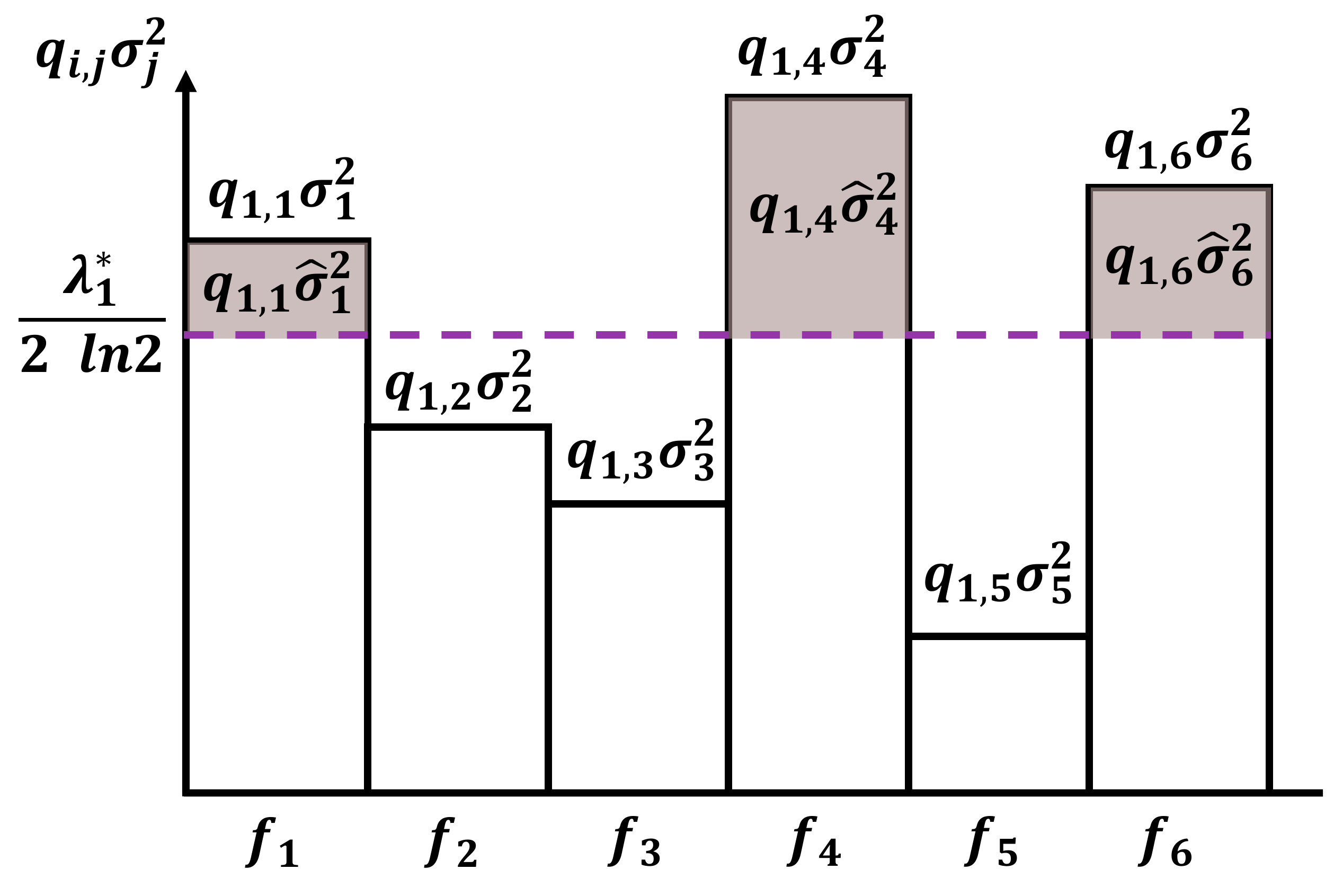}
 	\caption{Cache allocation at receiver $k$ with the LC-U scheme in a network with $N=6$ Gaussian sources.%files, assuming independent Gaussian sources for shared link capacity $R=0$.
 	}
 	\label{fig: WaterFilling}
 \end{figure}

\item[(ii)] Delivery Phase: For a given demand realization $\dbf\in \mathfrak D$, the sender computes the optimal per-receiver delivery rates $\{R_{k,\dbf}\}_{k=1}^K$ jointly across all the receivers in the network by solving the following problem
\begin{equation}
	\begin{aligned}
	&\text{min}
	& & \frac{1}{K} \sum_{k=1}^{K} D_{d_k} (W_{d_k}^F , {\widehat W}_{d_k}^F) = \frac{1}{K} \sum_{k=1}^{K}\sigma_{d_{k}}^{2}2^{-2 (M_{k,d_{k}}^{*}	   +R_{k,\dbf})} \\
	& \text{s.t.}
	& & \sum_{k=1}^{K} R_{k,\dbf} \leq R,  \;\;\; R_{k,\dbf} \geq 0, \;\;\forall k \in [K]
	%&&&  R_{k,\dbf} \geq 0, \hspace{1cm} \forall k \in [K]
	\end{aligned}\label{eq: LCU-Rate}
\end{equation}
which results in  
\begin{equation}
 R_{k,\dbf}^{*}=\left( \log_{2}\sqrt{\frac{2\ln{2}(\sigma_{d_{k}}^{2})}{\gamma_{\dbf}^{*}}}-M_{k,d_{k}}^{*}\right)^{+},
 \end{equation}
with $\gamma_{\dbf}^{*}$ chosen such that $\sum_{k=1}^KR_{k,\dbf}^{*}=R$. 

\begin{remark}    
The caching and delivery strategy described above are such that the receivers cache and receive sequential bits of successively refinable files. Hence, all bits transmitted to the receivers are useful for file reconstruction, i.e., the effective rate delivered to receiver $k\in[K]$ is  \textup{Eff}$(M_{k,d_{k}}^{*},R_{k,\dbf}^*) = M_{k,d_{k}}^{*}+R_{k,\dbf}^*$.
 LC-U is a scalable coding scheme described by two layers, one base layer and one enhancement layer. During the caching phase receiver $k\in[K]$ stores the base layer of file $n\in[N]$ with rate $M_{k,n}$ bits/sample. In the delivery phase, the sender unicasts the enhancement layers of the requested files to the corresponding receivers with rates $\{R_{k,\dbf}\}_{k=1}^K$, using $K$ disjoint multicast encoders.
\end{remark}

\end{enumerate}

In LC-U, the caching process is decentralized and it does not require any coordination from the sender, since receivers fill their caches based on their own preferences, $\{q_{k,n}\}$, and file characteristics, $\{\sigma_{n}^{2}\}$. On the other hand, the requested files are delivered in a centralized manner. Using the cache placements at all receivers, the sender jointly optimizes the per-receiver rates.

% ********************************************************************************************************
% MULTICAST TRANSMISSIONS
% ********************************************************************************************************
 \section{Cooperative Cache-aided Coded Multicast (CC-CM) Scheme}\label{sec: Multicast}
%Generating the multicast codeword while taking into account the broadcast nature of the wireless transmitter and the global information, i.e., the cache content and demands of all receivers, results in more efficient transmissions over the shared link, which, in turn, reduces the expected reconstruction distortion. 
In this section, we present an achievable caching scheme, referred to as CC-CM, that  determines the cache allocations $\{ {\bf M}_k\}_{k=1}^K$, and the per-receiver delivery rates $ \{R_{k,\dbf}\}_{k=1}^K$,  jointly across all receivers and demand realizations $\dbf\in\mathfrak D$, based on the global network information. In addition, CC-CM  uses the broadcast nature of the wireless transmitter. 
%This section presents an achievable caching scheme, which we call the CC-CM  scheme, that  determines the cache allocations $\{ {\bf M}_k\}_{k=1}^K$, and the per-receiver delivery rates $ \{R_{k,\dbf}\}_{k=1}^K$,  jointly across all receivers and demand realizations $\dbf\in\mathfrak D$, and also uses the broadcast nature of the wireless transmitter. Generating the multicast codeword while taking into account the global information, i.e., the cache content and demands of all receivers, results in more efficient transmissions over the shared link, which, in turn, reduces the expected reconstruction distortion.    
%As mentioned in Sec. \ref{sec: Introduction}, it was shown in \cite{maddah14fundamental, maddah14decentralized} that in cache-aided wireless networks jointly designing the cache content and a coded multicast codeword that is simultaneously useful for multiple receivers enables multiplicative caching gains in terms of  the aggregate shared link rate required to deliver the desired per-receiver rates. However, cache-aided delivery schemes available in literature are based on fixed-to-variable source encoding, designed to minimize the aggregate rate on the shared link so that the requested files are reconstructed with zero distortion \cite{maddah14fundamental, maddah14decentralized,  ji15order,yu2017characterizing}, or within pre-specified distortion requirements \cite{timo2016rate,yang2018coded,ibrahim2018coded}.  
 This allows for more efficient use of the shared link compared with LC-U. Similar to LC-U, our goal is to minimize the expected distortion across the network for a given rate budget $R$. To this end, we solve the following problem for cache allocations $\{M_{k,n} \}$ and per-receiver delivery rates $\{R_{k,\dbf}\}$

\begin{equation}
	\begin{aligned}
	&\min
	& &  
	{\mathbb E} \Big[ \frac{1}{K} \sum_{k=1}^{K}\sigma_{d_{k}}^{2}2^{-2\,\Xi(M_{k,d_{k}}	   , R_{k,\dbf})} \Big]\\
	& \text{s.t.}
	& & R_{\ach}\Big( \dbf, \{M_{k,n}\} ,  \{R_{k,\dbf}\} \Big ) \leq R,\quad\forall \dbf\in\mathfrak D\\
	&&& \sum\limits_{n=1}^N M_{k,n}\leq M_k , \;\; M_{k,n},\, R_{k,\dbf} \geq 0, \qquad\; \forall (k,n,\dbf) \in [K]\times[N]\times\mathfrak D %\qquad\forall k\in[K]\\
	%&&&  M_{k,n},\, R_{k,\dbf} \geq 0, \qquad\; \forall (k,n,\dbf) \in [K]\times[N]\times\mathfrak D
	\end{aligned}\label{eq: Main Opt}
\end{equation}
where $R_{\ach}\Big( \dbf, \{M_{k,n}\} ,  \{R_{k,\dbf}\}  \Big ) $ denotes the aggregate multicast rate achieved by the CC-CM scheme for demand realization $\dbf\in\mathfrak D$. The rate $R_{\ach}\Big( \dbf, \{M_{k,n}\} ,  \{R_{k,\dbf}\}  \Big ) $ depends on the architecture of the cache encoder, multicast encoder and multicast decoders described in Sec.~\ref{sec: ProblemSetting} used by CC-CM, and can be difficult to compute in general. In order to make its computation analytically tractable, we focus on a subclass of CC-CM schemes by imposing further restrictions on the per-receiver rates $R_{k,\dbf}$, which could possibly result in a suboptimal solution. Specifically, we assume that for any $(k,\dbf) \in [K]\times\mathfrak D$ the per-receiver rate $R_{k, \dbf}$ is composed of two portions, i.e., $R_{k, \dbf}=  \widetilde R_{k, d_{k}}  + \widehat R_{k, \dbf}$:  (i) a portion, $\widetilde R_{k, d_{k}} $,  delivered  via {\em coded} multicast transmissions, which can be evaluated in a closed-form expression and which depends only on the  receiver-file index pair $(k,d_k)$,  i.e.,  each receiver and its requested file,  and (ii) a portion, $\widehat R_{k, \dbf}$,  delivered via {\em uncoded} muticast transmissions, which depends on the entire demand vector $\dbf$.  The advantage of this approach, as further explained in Sec.~\ref{subsec: RAPP-GCC with SVC}, is that the first multicast portion of the rate, namely $\widetilde{R}_{k, d_{k}}$, can be optimized jointly based on the global information, whereas the second unicast portion, namely $\widehat{R}_{k, \dbf}$, can utilize a solution similar to that of LC-U of Sec. \ref{sec: Unicast} to exhaust  the remaining portion of the  total rate budget of $R$ once  the aggregate  multicast rate is accounted for.  Note that the reconstruction distortion at receiver $k\in[K]$ for demand $\dbf$ is determined by $M_{k,d_k}, \widetilde R_{k,d_k}$ and $\widehat R_{k,\dbf}$, which can vary across different receivers. The optimization problem  in \eqref{eq: Main Opt}, as well as its simplified form, is exponential in the number of receivers since $R_{\ach}\Big(  \dbf,\{M_{k,n}\} ,  \{R_{k,\dbf}\} \Big )$ depends on the demand realization $\dbf\in \mathfrak D \equiv [N]^K$.   Later in Sec.~\ref{subsec:performance of cc-cm}, we simplify \eqref{eq: Main Opt} by replacing the exponential number of per-demand rate constraints with an average rate contraint.

 Throughout this paper, we use \emph{aggregate coded rate} to refer to the overall rate sent over the shared link through coded multicast transmissions, which is a function of  the per-receiver coded rates $\{\widetilde R_{k, d_{k}}\}$. Additionally, we use \emph{aggregate uncoded rate} to refer to the overall rate transmitted through uncoded transmissions, which is a function of  the per-receiver uncoded rates $\{ \widehat R_{k, \dbf}\}$.  While the aggregate uncoded rate can be upper bounded by the sum rate $ \sum_{k=1}^K \widehat R_{k, \dbf}$, the aggregate coded rate depends on the specific scheme adopted for the coded multicast transmission and its multiplicative coding gains.

In the remainder of this paper, in order to characterize the aggregate coded rate, and to implement the caching phase and the portion of the delivery phase corresponding to coded multicast transmissions, we adopt a generalization of the cache-aided coded multicast delivery scheme proposed in  \cite{ji15order}, which we refer to as RF-GCC. %{\em Random Fractional (RF) caching  with Greedy Constrained Coloring (GCC)}. 
Then, we use this scheme and a variation of the LC-U  of Sec.~\ref{sec: Unicast}  to quantify $R_{\ach}\Big( \dbf, \{M_{k,n}\} ,  \{R_{k,\dbf}\}  \Big )$ in problem \eqref{eq: Main Opt}.

In Sec.~\ref{sec:RAP}, we first describe RF-GCC, obtained by generalizing {\em Random Aggregate Popularity caching with Greedy Constrained Coloring (RAP-GCC)} \cite{ji15order} to a setting with heterogeneous cache sizes, where receivers are interested in receiving different-length portions of the same file that map to the possibly different per-receiver reconstruction distortions resulting from problem \eqref{eq: Main Opt}.  We then derive an upper bound on the aggregate coded rate achieved with this scheme. 
We start Sec.~\ref {sec:RAP-GCCOptimization} by describing how RF-GCC can be adopted by   CC-CM to fill receiver caches and to deliver the coded portion of the multicast   codeword. We then use the upper bound on the rate achieved by RF-GCC   to characterize $R_{\ach}\Big( \dbf, \{M_{k,n}\} ,  \{R_{k,\dbf}\}  \Big ) $, and to solve optimization \eqref{eq: Main Opt}. %the rate-distortion-memory trade-off of the proposed CC-CM scheme.  

%In the remainder of this paper, in order to quantify the aggregate coded rate, and to implement the caching phase and the portion of the delivery phase corresponding to coded multicast transmissions, we adopt a generalization of the {\em Random Aggregate Popularity (RAP) caching with Greedy Constrained Coloring (GCC)}, or {\em RAP-GCC}, scheme proposed in  \cite{ji15order}, which we refer to as {\em Random Fractional (RF) caching with GCC}, or {\em RF-GCC}. Then, we use this scheme and a variation of the LC-U  of Sec.~\ref{sec: Unicast}  to quantify $R_{\ach}\Big( \dbf, \{M_{k,n}\} ,  \{R_{k,\dbf}\}  \Big )$ in problem \eqref{eq: Main Opt}.

% In Sec.~\ref{subsec: comparison to QD}, we discuss how   RF-GCC    generalizes prior work in  \cite{yang2018coded} to a setting with file heterogeneity, and in Sec.~\ref {subsec: RAPP-GCC with SVC} we describe how RF-GCC can be adopted by   CC-CM   to deliver the coded portion of the multicast   codeword. In Sec.~\ref{sec:RAP-GCCOptimization}, we use the upper bound on the rate achieved by   RF-GCC   to characterize $R_{\ach}\Big( \dbf, \{M_{k,n}\} ,  \{R_{k,\dbf}\}  \Big ) $, and to quantify the performance of the proposed CC-CM scheme.  

%\section{\BLUE Random Fractional with Greedy Constrained Coloring (RF-GCC) Scheme}

\section{Coded Multicast Delivery Through RF-GCC Scheme}\label{sec:RAP}
%This section presents a generalization of the RAP-GCC scheme proposed in \cite{ji15order} to a heterogeneous system, where receivers have different cache sizes, and they receive different-length versions of the files in the library.  

This section presents RF-GCC, a generalization of RAP-GCC  proposed in  \cite{ji15order}. In Sec.~\ref{sec:RAP-GCCOptimization}, we discuss how RF-GCC is adopted by CC-CM to implement the caching phase and the portion of the delivery phase corresponding to coded multicast transmissions. Recall that per Sec.~\ref{sec: Multicast}, the coded multicast transmission corresponds to rate $\widetilde{R}_{k,d_k}$ delivered to receiver $k\in[K]$.
	
In the system model considered in \cite{ji15order}, the files are generated from sources with the same distribution, they are requested by all receivers according to the same demand distribution, and the goal is to design a caching scheme that minimizes the expected multicast rate. Therefore, in the setting of \cite{ji15order}, the authors propose a scheme where the caching phase is designed only based on the {\em aggregate demand distribution}, i.e., the probability that a file is requested by at least one user. %, and they refer to their caching strategy as Random Aggregate Popularity (RAP). 
In this paper, our goal is to minimize the expected distortion and we adopt a generalization of this caching policy where the caches are filled based on not only the demand distributions but also the distortion-rate functions of the files. To this end, we use the more general term {\em Random Fractional (RF)} caching rather than RAP caching used in \cite{ji15order}.

Consider a cache-aided system with a library of $N$ files with length $\tau$ bits\footnote{We use $\tau$ instead of the conventional notation $F$ in  \cite{maddah14decentralized,maddah14fundamental,ji15order} to avoid confusion with the number of source-samples in Sec.~\ref{subsec: RAPP-GCC with SVC}.}  indexed by $\{ 1,\dots,N\}$ and $K$ receivers  $\{1,\dots,K\}$, where receiver $k\in[K]$ has a cache of size $\mu_k\tau$ bits.  For each file $n\in[N]$ in the library, there are $K$ different fixed-size {versions} available one for each receiver, such that {\em version} $k$ of file $n$ is composed of the {\em first} $\length_{k,n}\factor$ bits of file $n$.  For two indices $k_1$ and $k_2$, we say that version $k_1$ of file $n$, with length $\length_{k_1,n}\factor$, is a {\em degraded version} of version $k_2$ of file $n$, with length  $\length_{k_2,n}\factor$, if $\length_{k_1,n}\leq\length_{k_2,n}$. Receivers request files from the library following the demand distributions described in Sec.~\ref{sec: ProblemSetting}, and when receiver $k\in[K]$ requests file $n\in[N]$, the sender delivers version $k$ of file $n$ with length $\Omega_{k,n}$. Note that a version of a file is composed of a fixed number of successive bits, which corresponds to a file having a predetermined reconstruction distortion (e.g., video playback quality)\footnote{In our setting, the different versions of a file can correspond to different numbers of enhancement layers in successively refinable  compression (further explained in Sec.~\ref{subsec: RAPP-GCC with SVC}),  or could correspond to different-length portions of the same document.}. Throughout the remainder of this section we assume that the version lengths $\{\Omega_{k,n}\}$ are fixed, and later in Sec.~\ref{sec:RAP-GCCOptimization} we determine the optimal version lengths based on \eqref{eq: Het Opt}.

\begin{remark}\label{rmk:compare to QD} 
The setting considered in this section is similar to the one considered in \cite{yang2018coded}, where receivers have predefined distortion requirements. In \cite{yang2018coded}, the objective is to design an efficient caching scheme that minimizes the worst-case delivery rate over the shared-link for a given set of receiver cache capacities and distortion requirements. In our setting, the version lengths $\{\Omega_{k,n}\tau\}$ can be interpreted as receiver distortion requirements, which further generalizes the problem in \cite{yang2018coded} to each receiver having different distortion requirements for each file in the library. Differently from \cite{yang2018coded}, as defined in \eqref{eq: Main Opt}, our ultimate goal  is to minimize the expected distortion across the network for a given set of cache capacities and a given shared-link rate budget. As a means to solving the general problem in \eqref{eq: Main Opt}, our proposed solution in this subsection extends that of \cite{yang2018coded} to a setting with heterogeneity across files in addition to across receivers. 
	
\end{remark} 

We solve the problem defined in \cite[Sec. II]{ji15order}, by finding an upper bound on the rate-memory trade-off in cache-aided networks, for the setting described above, where degraded versions of files with different lengths are delivered to the receivers. In the following, we (i) describe a decentralized caching scheme in Sec.~\ref{subsec:description}, and (ii) characterize its achievable rate for a given demand in Theorem~\ref{thm:demand}, and on average over all demands in Theorem~\ref{thm:general}.

\subsection{Scheme Description}\label{subsec:description}
 As in conventional caching schemes, a fractional cache encoder divides each file into packets and determines the subset of packets from each file that are  stored in each receiver cache. For each demand realization in the delivery phase, the multicast encoder generates a multicast codeword  by computing an {\em index code} based on a coloring of the index coding conflict graph \cite{birk1998informed,IndexCoding}. The  RF-GCC scheme operates as follows:
 % As in conventional caching schemes, a fractional cache encoder divides each file into packets and determines the subset of packets from each file that are  stored in each receiver cache. To ensure robustness to system dynamics, the cache encoder fills the caches in a random fashion, usually referred to as {\em decentralized} caching in the literature, where the packets to be cached are selected according to a given caching distribution. The caching distribution is designed with respect to the objective of the application in mind, e.g., to minimize the expected delivery rate based on the demand distributions as in \cite{ji15order}. For each demand realization in the delivery phase, the multicast encoder generates a multicast codeword constructed as a linear combination of the requested packets that are not locally available, such that each receiver can losslessly recover the corresponding version of its requested files. The multicast encoder generates this codeword   by computing an {\em index code} based on a coloring of the index coding conflict graph \cite{birk1998informed,IndexCoding}. The  RF-GCC scheme operates as follows:
\begin{enumerate}
	\item[(i)] Caching Phase: All the versions of the library files are partitioned into equal-size packets of lengths $ T$ bits. The cache encoder is characterized by $K$ vectors,  $\pbf_k=(p_{k,1},\ldots, p_{k,N})$, $k = 1,\dots,K$, referred to as the {\em caching distributions}, such that $  p_{k,n}\in[0,1/\mu_k]$ and $\sum_{n=1}^N p_{k,n} = 1$, for any $k\in[K]$. Element $p_{k,n}$ represents the portion of receiver $k$'s cache capacity that is assigned to storing packets from version $k$ of file $n\in[N]$. Receiver $k\in[K]$  selects and stores a subset of $ p_{k,n}\cache_k \factor /T$ distinct packets from version $k$ of file $n$, uniformly at random. The caching distributions, $\{\pbf_1,\dots,\pbf_K\}$, are optimally designed based on an objective function, for example to minimize the rate of the corresponding index coding delivery scheme as in \cite{ji15order}, or to minimize the expected network distortion as in Sec.~\ref{sec:RAP-GCCOptimization} of this paper. In the following, we denote by $\Cbf =\{\Cbf_1,\dots,\Cbf_K \}$ the packet-level cache configuration, where $\Cbf_k$  denotes the set of packets cached at receiver $k\in[K]$, which correspond to the packets from version  $k$ of all library files.

\item[(ii)] Delivery Phase: For a given demand realization $\dbf$, we denote the packet-level
demand realization by $\Qbf = \{\Qbf_1,\dots,\Qbf_K \}$, where $\Qbf_k$ denotes the set of packets from the file version requested by receiver  $k\in[K]$, i.e., version  $k$ of file $d_k$ with length $\Omega_{k,d_k}\tau$, that are not cached at it. In order to determine the set of packets that need to be delivered, the sender constructs an index coding {\em conflict graph}, which is the complement of the side information graph as described in  \cite{birk1998informed,IndexCoding}. For a given packet-level cache configuration $\Cbf$ and demand realization $\Qbf$, the conflict graph, denoted by $\mathcal H_{\Cbf,\Qbf}$, is constructed as follows:  
\begin{enumerate}
	\item For each requested packet in $\Qbf$, there is a vertex $v$ in the graph uniquely identified by the label $\{\alpha(v),\beta(v),\eta(v)\}$, where $\alpha(v)$ indicates the packet identity associated to $v$, $\beta(v)$ is the receiver requesting it and $\eta(v)$ is the set of all receivers that have cached  the packet.
	\item For any two vertices $v_1$, $v_2$, we say that vertex $v_1$ interferes with vertex $v_2$ if: $1)$ the packet associated with $v_1$, $\alpha(v_1)$, is not in the cache of the receiver associated with  $v_2$, $\beta(v_2)$; and if $2)$ $\alpha(v_1)$ and $\alpha(v_2)$ do not represent the same packet. There exists an undirected edge between $v_1$ and $v_2$ if $v_1$ interferes with $v_2$ or if $v_2$ interferes with $v_1$.
\end{enumerate}
Given a valid vertex coloring\footnote{A valid vertex coloring is an  assignment of colors to   vertices such that no two adjacent vertices are assigned the same color.} of the conflict graph $\mathcal H_{\Cbf, \Qbf}$, the multicast encoder generates the multicast codeword by concatenating the XOR of the packets with the same color. A chromatic number {\em index code} for this graph results from generating the multicast codeword based on the valid coloring that results in the shortest codeword.  Computing the index code based on graph coloring is NP-complete and quantifying its performance can be quite involved. In order to quantify the achievable rate, as in  \cite{ji15order},  we adopt a greedy approximation of the algorithm referred to as Greedy Constrained Coloring (GCC), which has polynomial-time complexity in the number of receivers and packets.  Due to space limitations, we refer the reader to  \cite[Algorithms 1 and 2]{ji15order} for the pseduo code of GCC. This coloring results in a possibly larger multicast codeword compared to the chromatic number index code, but as shown in \cite{ji15order}, for very large block lengths ($\tau\rightarrow\infty$),  its achievable rate:  (i) can be evaluated in a closed-form expression, and (ii) it provides a tight upper bound on the rate achieved with the chromatic number index code (i.e., it is asymptotically order-optimal). %However,  the performance of RF-GCC degrades for finite block lengths $\tau$. Hence, when operating in the finite block length regime, one can design improved greedy coloring algorithms as the one proposed in \cite{ji15efficient}, that have polynomial-time complexity and that exploit the structure of the conflict graph to recover a significant amount of the multiplicative caching gains observed in the asymptotic regime. 
 
  \end{enumerate}

Let $R^{C}\Big(\dbf, \{\mu_k \}, \{\pbf_{k}\}, \{{\length}_{k,n}\}  \Big)$  denote the asymptotic coded multicast rate achieved by  RF-GCC, as $\factor\rightarrow\infty$, for a given demand $\dbf$, caching distributions $\{\pbf_k\}$ 
and file version lengths $\{\length_{k,n}\factor\}$.  As in caching literature, $R^{C}\Big(\dbf, \{\mu_k \},  \{\pbf_{k}\}, \{{\length}_{k,n}\}  \Big)$ is defined as the limiting value ($\factor\rightarrow\infty$) of the length (in bits) of the multicast codeword nominalized by $\factor$. Next, we provide an upper bound on the achievable rate $R^{C}\Big(\dbf, \{\mu_k \},  \{\pbf_{k}\}, \{{\length}_{k,n}\}  \Big)$, which is used in Sec.~\ref{sec:RAP-GCCOptimization} to solve optimization problem \eqref{eq: Main Opt}, and to derive the optimal values  $\{M_{k,n}^*\}$,  $\{\widetilde R_{k,n}^*\}$ and  $\{\widehat R_{k, \dbf}^*\}$.

 \subsection{Achievable Rate}
\label{sec:Achievable Rate}
The following theorems provide closed-form upper bounds on the delivery rate. % $R^C\Big(\dbf, \{\mu_k \},  \{\pbf_{k}\}, $ $\{\length_{k,n}\}  \Big)$.   
Specifically, Theorem \ref{thm:demand} characterizes the achievable rate  for demand $\dbf\in\mathfrak D$, i.e., $R^C\Big(\dbf, \{\mu_k \},  \{\pbf_{k}\}, $ $\{\length_{k,n}\}  \Big)$, while Theorem \ref{thm:general} upper bounds the expected rate over all demand realizations, denoted by ${\bar R}^C\Big(\{\qbf_k\} , \{\mu_k \},$ $\{\pbf_{k}\},\{\length_{k,n}\}  \Big)$.

 \begin{theorem}
	\label{thm:demand}
	In a network with $K$ receivers and $N$ files, for a given demand    $\dbf\in\mathfrak D$,  a given set of cache capacities $\{\mu_k\}_{k=1}^K$ and caching distributions $\{\pbf_{k} \}_{k=1}^K$,  
	the asymptotic coded multicast rate required to deliver the requested file versions with length $\{\length_{k,d_k} \factor\}_{k=1}^K$,  is upper bounded as 
	\begin{align}  
	%  \limsup_{\factor\rightarrow\infty} 
	R^C \Big(\dbf, \{\mu_k \}  ,\{\pbf_{k}\},\{{\length}_{k,n}\} \Big)  \leq  \min \bigg\{      \Psi_{\dbf}^{(1)}\Big(  \{\mu_k \} ,\{\pbf_{k}\},\{{\length}_{k,n}\}    \Big),   \Psi_{\dbf}^{(2)}\Big(   \{\mu_k \}  ,\{\pbf_{k}\},\{{\length}_{k,n}\}  \Big) \bigg\},\notag
	\end{align}
	\noindent where
	\begin{align}
	&  \Psi_{\dbf}^{(1)}\Big( \{\mu_k \} , \{\pbf_{k}\},\{{\length}_{k,n}\}   \Big) =   \sum_{i=1}^{K} \sum_{\ell=1}^{ K-i+1}   \sum_{\mathcal K_{\ell} \subseteq    \{\order_i,\dots,\order_{K}\}  }  \;  
		\Big( \length_{\order_i,d_{\order_i}}  -  	 \length_{\order_{i-1},d_{\order_{i-1}} }    \Big)   \,  \max\limits_{k\in\mathcal K_\ell}	{ \lambda_i }(\mathcal K_\ell,k,d_{k}) ,  \notag\\
	&   \Psi_{\dbf}^{(2)} \Big(  \{\mu_k \}  ,\{\pbf_{k}\},\{{\length}_{k,n}\}\Big)=   \sum_{n =1}^N  \mathbbm{1} \{n \ni \dbf\} \Big( \max\limits_{k: d_k = n}  \, {\length}_{k,n}  - \min\limits_{k: d_k = n} p_{k,n}\mu_k\Big) ,   \label{eq:psi 2} \\
	& {  \lambda_i} (\mathcal K_\ell,k,n)   = (1-p^c_{k,n})   \prod\limits_{u\in \mathcal{K}_{\ell}\backslash \{k\}}p^c_{u,n} \prod\limits_{u\in {  \{\order_i,\dots,\order_{K}\}}\setminus  \mathcal{K}_{\ell}}{(1-p^c_{u,n})},\label{eq:lambda thm}\\
	& p^c_{k,n} =p_{k,n} \frac{ \mu_k}{ \length_{k,n}} , \label{eq: pc is prob}
	\end{align}
	%and for a given demand $\dbf$ and receiver subset $\mathcal K_\ell \subseteq [K]$, receiver $k^*$ is the one requesting the shortest version of file, i.e. $k^*  = \argmin\limits_{k\in\mathcal K_\ell} \Omega_{k, d_k}$.  
	where $\mathcal{K}_{\ell}$ denotes a given set of $\ell$ receivers, and for a given demand $\dbf$, $\order_1,\dots,\order_K$ denotes an ordered permutation of receiver indices such that $\Omega_{\order_1,d_{\order_1}}\leq\dots\leq\Omega_{\order_K, d_{\order_K}}$, where ${\length}_{\order_0,\order_{d_0}}=0$ and $\{\order_1,\order_0  \} = \emptyset$. In \eqref{eq: pc is prob}, $p^c_{k,n}$ denotes the probability that a packet from version $k\in[N]$ of file $n\in[N]$ is cached at receiver $k$. We use $i\ni \xbf$ to indicate that $i$ is one of the elements of  $\xbf$.
\end{theorem}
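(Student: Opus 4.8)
The plan is to reduce Theorem~\ref{thm:demand} to the coded-multicast rate analysis of RAP-GCC in \cite{ji15order}, handling the new ingredient — that receivers request \emph{degraded versions of different lengths} rather than whole equal-size files — by a layering argument. First I would recall that each version $k$ of file $n$ consists of the first $\Omega_{k,n}\tau$ bits, packetized into pieces of length $T$, and that receiver $k$ caches a uniformly random subset of $p_{k,n}\mu_k\tau/T$ of these packets, so that the probability a given packet of version $k$ of file $n$ is cached at $k$ is exactly $p^c_{k,n}=p_{k,n}\mu_k/\Omega_{k,n}$ as in \eqref{eq: pc is prob}; one must check $p^c_{k,n}\in[0,1]$, which follows from the constraint $p_{k,n}\in[0,1/\mu_k]$ and $\Omega_{k,n}\le 1$ (normalized file length). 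The bound is the minimum of two quantities $\Psi^{(1)}_{\dbf}$ and $\Psi^{(2)}_{\dbf}$, and I would establish each separately, since the delivery phase can always fall back to the cheaper of two achievable strategies.

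For the $\Psi^{(2)}_{\dbf}$ bound I would use the simple ``naive multicast'' strategy: for each file $n$ actually requested (i.e.\ $n\ni\dbf$), the sender multicasts the bits needed by the receiver wanting the longest version of $n$, namely up to $\max_{k:d_k=n}\Omega_{k,n}\tau$ bits, minus the bits that are \emph{already common} to all requesting receivers' caches; since caching is done independently and uniformly, the worst case is that only $\min_{k:d_k=n}p_{k,n}\mu_k\tau$ bits can be saved uniformly, giving the per-file term $\big(\max_{k:d_k=n}\Omega_{k,n}-\min_{k:d_k=n}p_{k,n}\mu_k\big)$ and hence \eqref{eq:psi 2}. (A cleaner route: this is exactly the uncoded-delivery bound, and it is always achievable regardless of the coloring.)

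The substantive part is $\Psi^{(1)}_{\dbf}$, the coded bound. Here I would order the receivers as $\chi_1,\dots,\chi_K$ so that $\Omega_{\chi_1,d_{\chi_1}}\le\cdots\le\Omega_{\chi_K,d_{\chi_K}}$ and \emph{stratify the requested content into nested layers}: layer $i$ is the bit-interval $\big(\Omega_{\chi_{i-1},d_{\chi_{i-1}}},\,\Omega_{\chi_i,d_{\chi_i}}\big]$, of thickness $\Omega_{\chi_i,d_{\chi_i}}-\Omega_{\chi_{i-1},d_{\chi_{i-1}}}$, which is requested precisely by receivers $\chi_i,\dots,\chi_K$. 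Within a fixed layer $i$ the problem becomes a standard equal-length coded-caching delivery problem over the receiver subset $\{\chi_i,\dots,\chi_K\}$, so I would invoke the GCC analysis of \cite[Algorithms 1--2]{ji15order}: the expected (over the random cache placement) length of the index code produced by greedy constrained coloring, normalized by $\tau$, is bounded by summing over subsets $\mathcal K_\ell\subseteq\{\chi_i,\dots,\chi_K\}$ of the probability that a packet is cached at exactly the receivers in $\mathcal K_\ell$ (within that layer) times $1$ color per such subset, weighted by $\max_{k\in\mathcal K_\ell}\lambda_i(\mathcal K_\ell,k,d_k)$ with $\lambda_i$ as in \eqref{eq:lambda thm} — this $\lambda_i$ is exactly the ``fraction of packets of the requested version that are cached at the receivers in $\mathcal K_\ell$ and nowhere else among $\{\chi_i,\dots,\chi_K\}$'' computed from the independent Bernoulli($p^c$) caching. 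Multiplying the per-layer color count by the layer thickness and summing over $i$ and $\ell$ yields $\Psi^{(1)}_{\dbf}$. The final step is a concentration argument: as $\tau\to\infty$ the realized code length concentrates around this expectation (standard for the random packetized placement, as in \cite{ji15order}), so the asymptotic rate $R^C$ is upper bounded by $\Psi^{(1)}_{\dbf}$ almost surely.

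The main obstacle I anticipate is making the layering rigorous at the packet level — one must verify that GCC run on the full conflict graph $\mathcal H_{\Cbf,\Qbf}$ does no worse than running it layer-by-layer (equivalently, that edges between vertices in different layers do not force \emph{additional} colors beyond the per-layer counts), and that the ``cached exactly at $\mathcal K_\ell$ among $\{\chi_i,\dots,\chi_K\}$'' event has the claimed probability $\lambda_i$ when the random subsets chosen for \emph{different} versions/files are independent but the thinner version's packets are a subset of the thicker one's. This bookkeeping — reconciling the nested-version structure with the independence used in \cite{ji15order} — is where the generalization beyond \cite{yang2018coded,ji15order} really lives; everything else is a direct transcription of the GCC rate computation.
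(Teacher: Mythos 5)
Your proposal reconstructs essentially the same argument as the paper's Appendix A: the $\Psi^{(1)}$ bound comes from ordering receivers by $\Omega_{\chi_i,d_{\chi_i}}$, stratifying the conflict graph into $K$ nested layers, bounding the number of independent sets produced by GCC$_1$ within each layer via the receiver-label enumeration and the Bernoulli caching probability $p^c_{u,n}$, and letting $\tau\to\infty$ for concentration; the $\Psi^{(2)}$ bound is the naive-multicast (GCC$_2$) count. You also correctly identify the two places where the generalization beyond \cite{ji15order} needs care — that the graph decomposition must not undercount, and that the ``cached exactly at $\mathcal K_\ell$'' probability must be reconciled with each receiver's own storing range for the requested file — which are precisely the subtleties the paper's appendix handles implicitly. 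So this is a faithful, essentially equivalent reconstruction. One small caution on the $\Psi^{(2)}$ step, which applies equally to the paper's appendix and to your sketch: the justification that the random RF placement's uncoded rate is dominated by $\max_{k:d_k=n}\Omega_{k,n}-\min_{k:d_k=n}p_{k,n}\mu_k$ (the rate under the aligned deterministic placement) is not a clean upper bound in general — random placement of the same cache fractions can produce \emph{less} overlap across requesters of the same file than the aligned placement, hence a slightly larger uncoded transmission; so the ``worst case is only $\min p_{k,n}\mu_k$ saved'' phrasing (and the paper's corresponding sentence) deserves either a conditioning argument or a direct computation of the expected union $\big|\bigcup_k\Qbf_k\big|$ rather than reduction to the aligned scheme.
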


\begin{proof}
	The proof is given in Appendix \ref{App:demand}.
\end{proof}

By averaging over all possible demand realizations $\dbf\in\mathfrak D$ we obtain the following result.
% ***************************************************************************
% Theorem 2
\begin{theorem}
	\label{thm:general}
	In a network with $K$ receivers and $N$ files, for a given set of demand distributions   $\{\qbf_k\}_{k=1}^{K}$,  cache capacities $\{\mu_k \}_{k=1}^K$, and caching distributions $\{\pbf_{k} \}_{k=1}^K$,  
	the asymptotic expected coded multicast rate required to deliver the requested file versions with length $\{\length_{k,n} \factor \}_{(k,n)\in[K]\times[N]}$,  is upper bounded as 
	
 \scalebox{0.9}{\parbox{\linewidth}{%
	\begin{equation} %\label{eq:demanrate}
	{\bar R}^C \Big(\{\qbf_{k}\},\{\mu_k \} ,\{\pbf_{k}\},\{{\length}_{k,n}\}   \Big)  \leq  \min \bigg\{    
	\bar\Psi^{(1)}\Big(  \{\qbf_{k}\},\{\mu_k \} , \{\pbf_{k}\}, \{  \length_{k,n} \}   \Big),   \bar\Psi^{(2)}\Big(  \{\qbf_{k}\},\{\mu_k \} , \{\pbf_{k}\}, \{  \length_{k,n} \}  \Big) \bigg\},\notag
	\end{equation}
}}
	\begin{align}
	& \text{where\quad} {\bar \Psi}^{(1)}\Big( \{\qbf_{k}\}, \{\mu_k \} , \{\pbf_{k}\}, \{  \length_{k,n} \}     \Big) \notag\\
	&\qquad\qquad= 
	 \sum_{i=1}^{K} \sum_{\ell=1}^{K-i+1}  \sum_{n =1}^{N}\sum_{\mathcal K_{\ell} \subseteq \{\order_{i}^*,\dots,\order_K^*  \}}    \sum_{k \in {\mathcal K}_\ell  }
	 \Big(\length_{\order_i^*}^* -  \length_{\order_{i-1}^*}^*  \Big) \lambda_i(\mathcal K_\ell,k,  n)  \,\Gamma_i(\mathcal K_\ell,k,n),  \label{eq:psi 1 exp} \\
	&  {\bar \Psi}^{(2)} \Big(  \{\qbf_{k}\},\{\mu_k \} , \{\pbf_{k}\}, \{  \length_{k,n} \}  \Big) = \sum_{n=1}^N \Big(1-\prod_{k=1}^K (1-q_{k,n})\Big)\Big(\max\limits_{k\in[K]}   \length_{k,n}  -  \min\limits_{k\in[K]}  p_{k,n}\mu_k \Big) 
	,   \label{eq:psi 2 exp} \\
	&   \Gamma_i({\mathcal K_\ell,k,n} ) = \mathbb P\Big( (k,n)= \argmax\limits_{ ( s,t): s\in\mathcal K_\ell, \,t = f_s  }\,\lambda_i(\mathcal K_\ell,s,t)    \Big),\label{eq:gammaa}\\
	& \length^*_{k} = \max_{n\in[N]} \length_{k,n} ,
	\end{align}
and with $\lambda_i  (\mathcal K_\ell,k,n) $  defined in \eqref{eq:lambda thm}, and where $\order_1^*,\dots,\order_K^*$ denotes an ordered permutation of  receiver indices $\{1,\dots,K  \}$ such that $\length_{\order_1^*}^*\leq\dots\leq\Omega_{\order_K^*}^*$. In \eqref{eq:gammaa},  $\fbf$ denotes the $\ell$-dimensional sub-vector of demand $\dbf$ corresponding to receivers in set $\mathcal K_\ell$, and $\Gamma_i(\mathcal K_\ell,k,n) $ denotes the probability that file $n \ni \fbf$ requested by receiver $k\in\mathcal K_\ell$ maximizes  the quantity $\lambda_i(\mathcal K_\ell,s,t) $. 
  
\end{theorem}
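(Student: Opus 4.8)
The plan is to derive Theorem~\ref{thm:general} from Theorem~\ref{thm:demand} by taking the expectation over the random demand vector $\dbf$ and pushing the expectation inside the two bounding functions $\Psi^{(1)}_{\dbf}$ and $\Psi^{(2)}_{\dbf}$ via Jensen's inequality, since $\min$ of the two expectations dominates the expectation of the $\min$. Concretely, from Theorem~\ref{thm:demand} we have $R^C(\dbf,\dots)\leq\min\{\Psi^{(1)}_{\dbf},\Psi^{(2)}_{\dbf}\}\leq\Psi^{(j)}_{\dbf}$ for each $j\in\{1,2\}$, so $\bar R^C=\mathbb E_{\dbf}[R^C(\dbf,\dots)]\leq\min\{\mathbb E_{\dbf}[\Psi^{(1)}_{\dbf}],\mathbb E_{\dbf}[\Psi^{(2)}_{\dbf}]\}$. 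It then remains to identify $\mathbb E_{\dbf}[\Psi^{(j)}_{\dbf}]$ with $\bar\Psi^{(j)}$ as defined in \eqref{eq:psi 1 exp}--\eqref{eq:psi 2 exp}. The demands $d_1,\dots,d_K$ are independent across receivers with $d_k\sim\qbf_k$.

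For the $\Psi^{(2)}$ term this is the easy part: in \eqref{eq:psi 2} the summand for file $n$ is active iff $n$ is requested by at least one receiver, which happens with probability $1-\prod_{k=1}^K(1-q_{k,n})$ by independence; and one must argue that, conditioned on $n$ being requested, the inner difference $\max_{k:d_k=n}\Omega_{k,n}-\min_{k:d_k=n}p_{k,n}\mu_k$ is upper bounded by $\max_{k\in[K]}\Omega_{k,n}-\min_{k\in[K]}p_{k,n}\mu_k$ (enlarging the max over a subset to a max over all $K$, and shrinking the min accordingly) — a deterministic, per-realization bound that survives the expectation, yielding \eqref{eq:psi 2 exp}. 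For the $\Psi^{(1)}$ term the main subtlety is the permutation: in Theorem~\ref{thm:demand} the order $\chi_1,\dots,\chi_K$ depends on $\dbf$ through the requested-version lengths $\Omega_{\chi_i,d_{\chi_i}}$, whereas in Theorem~\ref{thm:general} the order $\chi^*_1,\dots,\chi^*_K$ is the fixed order of the per-receiver maxima $\Omega^*_k=\max_n\Omega_{k,n}$. The plan is to bound the telescoping increments $\Omega_{\chi_i,d_{\chi_i}}-\Omega_{\chi_{i-1},d_{\chi_{i-1}}}$ (summed against nonnegative coefficients $\max_{k\in\mathcal K_\ell}\lambda_i$) by the increments $\Omega^*_{\chi^*_i}-\Omega^*_{\chi^*_{i-1}}$ of the fixed ordering, exploiting that each $\Omega_{k,d_k}\leq\Omega^*_k$ and that the sum over the demand-dependent telescoping collapses appropriately; then replace $\max_{k\in\mathcal K_\ell}\lambda_i(\mathcal K_\ell,k,d_k)$ by its expectation and expand that expectation as $\sum_{k\in\mathcal K_\ell}\sum_{n=1}^N\lambda_i(\mathcal K_\ell,k,n)\Gamma_i(\mathcal K_\ell,k,n)$, where $\Gamma_i$ in \eqref{eq:gammaa} is precisely the probability that the pair $(k,n)$ — receiver $k$ requesting file $n$ — realizes the maximum over the sub-demand $\fbf$ on $\mathcal K_\ell$. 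This reindexing of $\mathbb E[\max]$ as a sum of $\lambda_i\cdot(\text{probability of being the argmax})$ is a standard identity for the expectation of a maximum of finitely many terms.

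I expect the main obstacle to be handling the coupling between the demand-dependent ordering $\chi_i$ and the demand-dependent coefficients simultaneously: one cannot naively replace the ordered increments and the $\max$-coefficients independently, because both are functions of the same $\dbf$. The cleanest route is to first condition on the demand sub-vector restricted to the relevant receiver set, rewrite the nested sum $\sum_i\sum_\ell\sum_{\mathcal K_\ell}$ in a form where the ordering enters only through which receivers have "small" versus "large" requested versions, then bound the ordered increments by the worst-case (fixed) ordering increments $\Omega^*_{\chi^*_i}-\Omega^*_{\chi^*_{i-1}}$ — which is legitimate because all coefficients $\lambda_i\Gamma_i$ are nonnegative and the $\Psi^{(1)}$ expression is monotone in the version lengths. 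Once the ordering is frozen, the remaining expectation factorizes over receivers by independence of the $d_k$, and the identification with \eqref{eq:psi 1 exp} follows by collecting terms. The bookkeeping with the boundary conventions $\Omega_{\chi_0,\cdot}=0$, $\{\chi_1,\chi_0\}=\emptyset$ and the analogous starred versions will require care but is routine.
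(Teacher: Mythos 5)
Your plan is essentially the paper's proof: push the expectation inside the $\min$ via $\mathbb{E}[\min\{\cdot,\cdot\}]\leq\min\{\mathbb{E}[\cdot],\mathbb{E}[\cdot]\}$, obtain $\bar\Psi^{(2)}$ by deterministically enlarging $\max_{k:d_k=n}$ to $\max_{k\in[K]}$ and shrinking the min, then multiplying by the request probability $1-\prod_k(1-q_{k,n})$, and obtain $\bar\Psi^{(1)}$ by replacing each $\Omega_{k,d_k}$ with $\Omega^*_k$ so the ordering $\chi^*$ is frozen, after which $\mathbb{E}\big[\max_{k\in\mathcal{K}_\ell}\lambda_i(\mathcal{K}_\ell,k,d_k)\big]$ is expanded via the argmax indicator into $\sum_{k\in\mathcal{K}_\ell}\sum_n\lambda_i(\mathcal{K}_\ell,k,n)\Gamma_i(\mathcal{K}_\ell,k,n)$. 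One cautionary remark: the sorted increments of $\{\Omega_{k,d_k}\}$ are \emph{not} dominated index by index by those of $\{\Omega^*_k\}$, so your initial term-by-term phrasing does not hold literally, but the global monotonicity argument you fall back on — replacing every requested length by the per-receiver maximum can only increase $\Psi^{(1)}_{\dbf}$, since more bits must be delivered — is exactly what the paper (tersely) invokes and is what carries the step.
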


\begin{proof}{}
	%The proof is given in \cite[Appendix B]{onlineVer} following steps similar to those in \cite[Appendix A]{ji15order}.
The proof is given in Appendix~\ref{App:general}. %or The proof is given in \cite[Appendix B]{onlineVer}.
\end{proof}
 
\subsection{Special Cases for  RF-GCC}\label{subsec: comparison to QD}
 In this section, we focus on two specialized settings with symmetry across the library files or across receivers. In Sec.~\ref{subsec:Symmetrical file}, we describe how, under file-symmetry, the proposed RF-GCC scheme is applicable to the problem studied in \cite{yang2018coded}, and Sec.~\ref{subsec:Symmetrical rec} considers symmetry across receivers, which is used in Sec.~\ref{subsec:performance of cc-cm} to solve the optimization problem in \eqref{eq: Main Opt}.

\subsubsection{\bf Symmetry Across Files}\label{subsec:Symmetrical file}
As explained in Remark~\ref{rmk:compare to QD}, the  RF-GCC proposed in Sec.~\ref{subsec:description} can be adopted for the problem studied in \cite{yang2018coded}. The network  in  \cite{yang2018coded} is composed of $N$ independent files and $K$ receivers with cache sizes $\{\mu_1,\dots,\mu_K\}$. Each receiver has a preset distortion requirement, $\{D_1,\dots,D_K\}$, i.e., any of the library files requested by receiver $k$ need to be delivered with distortion less than $D_k$, and the objective is to characterize the rate-memory trade-off for the worst-case demand. Then, for a given distortion-rate function, %for example the one given for Gaussian sources in Sec.~\ref{subsec:source}, 
the distortion requirements of receivers can be mapped to a given set of minimum compression rates. The minimum compression rates are equivalent to the normalized (by  constant $\tau$) version lengths $\{\Omega_{k,n}\}$ defined in Sec.~\ref{sec:RAP}, when $\Omega_{k,1}=\dots=\Omega_{k,N}$ for any $k\in[K]$. Therefore, the setting considered in \cite{yang2018coded} is a specialization of our network model in Sec.~\ref{subsec:description} to the case where each receiver is interested in getting equal length versions of the files in the library.

 Note that we characterize the rate-memory trade-off by deriving an upper bound on the rate of  RF-GCC for any given demand in Theorem~\ref{thm:demand}, from which we then characterize the average rate-memory trade-off in Theorem  \ref{thm:general}, while this trade-off is only provided for the worst-case scenario in  \cite[Sec IV]{yang2018coded}.    Specializing Theorem~\ref{thm:demand} to equal version lengths  leads to Corollary \ref{thm:demand QD} below. The worst-case rate-memory trade-off is given by  $\max\limits_{\dbf\in\mathfrak D}R^C \Big(\dbf, \{\mu_k \}  ,\{p_{k}\},\{{\length}_{k}\} \Big) $, and the average  trade-off can be derived as in  %\cite[Appendix B]{onlineVer} 
 Appendix~\ref{App:general} by taking expectation of the rate over all demands.

% ***************************************************************************** 
% Corllary 1
\begin{corollary}\label{thm:demand QD}
	In a network with $K$ receivers and $N$ files, for a given demand realization  $\dbf\in\mathfrak D$, and a given set of cache capacities $\{\mu_k\}$ and caching distributions with parameters $\{p_{k} \}_{k=1}^K$,  
	the asymptotic coded multicast rate required to deliver the requested file versions with length $\{\length_{k} \factor\}_{k=1}^K$,  is upper bounded as 
	\begin{align}  
	%  \limsup_{\factor\rightarrow\infty} 
	R^C \Big(\dbf, \{\mu_k \}  ,\{p_{k}\},\{{\length}_{k}\} \Big)  \leq  \min \bigg\{      \Psi_{\dbf}^{(1)}\Big(  \{\mu_k \}  ,\{p_{k}\},\{{\length}_{k} \}   \Big),   \Psi_{\dbf}^{(2)}\Big(   \{\mu_k \}  ,\{p_{k}\},\{{\length}_{k}\}\Big) \bigg\},
	\end{align}
	\noindent where
	\begin{align}
	&  \Psi_{\dbf}^{(1)}\Big( \{\mu_k \} , \{p_{k}\},\{{\length}_{k}\}   \Big) = \sum_{i=1}^{K} \Big( \length_{\order_i}  -  	 \length_{\order_{i-1}}     \Big) \sum_{\ell=1}^{K-i+1} \sum_{\mathcal K_{\ell} \subseteq   \{\order_i,\dots,\order_{K}\}}  
	  \max\limits_{k\in\mathcal K_\ell}	\lambda_i (\mathcal K_\ell,k) ,  \label{eq:psi 1 QD} \\
	&   \Psi_{\dbf}^{(2)} \Big(  \{\mu_k \}  ,\{p_{k}\},\{{\length}_{k}\} \Big)=   
	\sum_{n =1}^N  \mathbbm{1} \{n \ni \dbf\} \Big( \max\limits_{k: d_k = n}  \, {\length}_{k} - \min\limits_{k: d_k = n}  \, {\mu}_{k}p_k \Big)  ,   \label{eq:psi 2 QD} \\
	& \lambda_i (\mathcal K_\ell,k)   = (1-p^c_{k})\prod\limits_{u\in \mathcal{K}_{\ell}\backslash \{k\}} p^c_{u} \prod\limits_{u\in\{\order_i,\dots,\order_{K}\}\setminus  \mathcal{K}_{\ell}}{(1-p^c_{u})},\label{eq:lambda thm QD}\\
	& p^c_{k} =p_{k} \frac{ \mu_k}{ \length_{k}}\label{eq: pc is prob QD}	
	\end{align}	
	where $\mathcal{K}_{\ell}$ denotes a given set of $\ell$ receivers, and $\order_1,\dots,\order_K$ denotes an ordered permutation of receiver indices such that $\Omega_{\order_1}\leq\dots\leq\Omega_{\order_K}$. In \eqref{eq: pc is prob QD}, $p^c_{k}$ denotes the probability that a packet from version $k\in[N]$ of any file $n\in[N]$ is cached at receiver $k$.
	\end{corollary}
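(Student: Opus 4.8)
The plan is to obtain Corollary~\ref{thm:demand QD} as an immediate specialization of Theorem~\ref{thm:demand}, with no new argument required beyond bookkeeping: the entire substance — building the chromatic/GCC index code from the conflict graph $\mathcal H_{\Cbf,\Qbf}$ and evaluating its asymptotic length — lives in the proof of Theorem~\ref{thm:demand} in Appendix~\ref{App:demand}, which we are free to invoke. Concretely, I would instantiate Theorem~\ref{thm:demand} under the two structural restrictions that define the file-symmetric setting of this corollary: (i) the version lengths do not depend on the requested file, $\length_{k,n}=\length_k$ for every $n\in[N]$; and (ii) each receiver's caching distribution assigns the same weight to every file, so the vector $\pbf_k$ is described by a single scalar $p_k$ (i.e.\ $p_{k,n}=p_k$), which by the statistical symmetry of the library is without loss of optimality. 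Under (i)--(ii) the per-file caching probability $p^c_{k,n}=p_{k,n}\mu_k/\length_{k,n}$ of \eqref{eq: pc is prob} collapses to the $n$-free quantity $p^c_k=p_k\mu_k/\length_k$ of \eqref{eq: pc is prob QD}, and hence $\lambda_i(\mathcal K_\ell,k,n)$ in \eqref{eq:lambda thm} becomes $\lambda_i(\mathcal K_\ell,k)$ as in \eqref{eq:lambda thm QD}, independent of $n$.

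It remains to push this substitution through the two bounds $\Psi_\dbf^{(1)}$ and $\Psi_\dbf^{(2)}$. For $\Psi_\dbf^{(1)}$: the ordered permutation $\order_1,\dots,\order_K$, which in Theorem~\ref{thm:demand} sorts the demand-dependent lengths $\length_{\order_1,d_{\order_1}}\le\cdots\le\length_{\order_K,d_{\order_K}}$, now simply sorts $\length_{\order_1}\le\cdots\le\length_{\order_K}$ and is therefore a fixed permutation of $[K]$ (with the convention $\length_{\order_0}=0$ and ties broken arbitrarily, exactly as in Theorem~\ref{thm:demand}). The telescoping increment $\length_{\order_i,d_{\order_i}}-\length_{\order_{i-1},d_{\order_{i-1}}}$ becomes $\length_{\order_i}-\length_{\order_{i-1}}$, which depends only on $i$ and not on the inner summation indices $\ell$ and $\mathcal K_\ell$; pulling it out of the inner double sum over $(\ell,\mathcal K_\ell)$ yields exactly \eqref{eq:psi 1 QD}. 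For $\Psi_\dbf^{(2)}$ in \eqref{eq:psi 2}: replacing $\length_{k,n}\mapsto\length_k$ and $p_{k,n}\mu_k\mapsto p_k\mu_k$ turns $\max_{k:d_k=n}\length_{k,n}-\min_{k:d_k=n}p_{k,n}\mu_k$ into $\max_{k:d_k=n}\length_k-\min_{k:d_k=n}\mu_k p_k$, which is \eqref{eq:psi 2 QD}. Taking the minimum of the two specialized expressions reproduces the claimed bound, and the interpretation of $p^c_k$ as the probability that a given packet of version $k$ is stored at receiver $k$ is inherited verbatim from the caching scheme of Sec.~\ref{subsec:description}.

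Since Theorem~\ref{thm:demand} is available, there is no genuine obstacle; the only points deserving (minor) care are (a) verifying that the increment $\length_{\order_i}-\length_{\order_{i-1}}$ is genuinely free of $\ell$ and $\mathcal K_\ell$ so that the factoring step in $\Psi_\dbf^{(1)}$ is legitimate, and (b) justifying that restricting to $p_{k,n}=p_k$ costs nothing under a statistically symmetric library, so that the corollary faithfully captures the regime of \cite{yang2018coded}. An alternative would be to re-run the conflict-graph/GCC-coloring argument of Appendix~\ref{App:demand} directly with $n$-independent parameters, but this merely repeats that computation without simplification, so I would not pursue it; and the worst-case bound $\max_{\dbf\in\mathfrak D}R^C(\dbf,\{\mu_k\},\{p_k\},\{\length_k\})$ advertised before the statement then follows by maximizing the per-demand bound over $\dbf\in\mathfrak D$.
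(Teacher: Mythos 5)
Your proposal is correct and matches the paper's own (implicit) approach: the text immediately preceding Corollary~\ref{thm:demand QD} states that it follows by specializing Theorem~\ref{thm:demand} to equal version lengths, which is exactly the substitution $\length_{k,n}\mapsto\length_k$, $p_{k,n}\mapsto p_k$ (hence $p^c_{k,n}\mapsto p^c_k$ and $\lambda_i(\mathcal K_\ell,k,n)\mapsto\lambda_i(\mathcal K_\ell,k)$) that you carry out, with the factor $\length_{\order_i}-\length_{\order_{i-1}}$ pulled out of the inner sums since it is independent of $\ell$ and $\mathcal K_\ell$. Your side remark (b) on the optimality of restricting to $n$-independent $p_k$ is unnecessary for the corollary itself — it is stated for given caching distributions, not claiming optimality — but it does not affect the correctness of your argument.
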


\begin{remark}
For the setting considered in \cite{yang2018coded} with $\length_1\leq\dots,\leq\length_K$, we have observed  that when $N\geq K$, the worst-case delivery rate computed based on Corollary~\ref{thm:demand QD} is equal to the rate provided in \cite[Theorem 5]{yang2018coded}. Our numerical results show slight improvement in delivery rate compared to the rate  in \cite[Theorem 5]{yang2018coded} for the less common setting of $N<K$.

\end{remark}

\subsubsection{\bf Symmetry Across Receivers}~\label{subsec:Symmetrical rec}
Consider a network with symmetric receivers where all receivers have equal-size caches and request files according to the same demand distribution, i.e. $M_{k} = M$, $q_{k,n} = q_{n}$,  for all $(k,n) \in [k]\times[N]$.  In this network, it is immediate to verify that the optimal caching
distributions $\{\pbf_k \}$, and the corresponding cache allocations
$\{ M_{k,n}\}$ are uniform across all the receivers, i.e., $p_{k,n} = p_n$ and $M_{k,n} = M_n$,  for all $(k,n) \in [k]\times[N]$.  Furthermore, all receivers have the same storing range for file $n\in[N]$, i.e., $\length_{k,n} = \length_{n}$. %  or equivalently, they have  the same per-receiver coded rate $\widetilde R_{k,n} =\widetilde R_{n}$. 

 The following theorem characterizes the  asymptotic expected coded multicast rate achieved with the  RF-GCC scheme in this symmetric setting, and provides a tighter upper bound on the expected multicast rate compared to the one resulting from  specializing Theorem~\ref{thm:general} to a setting with symmetric receivers. Theorem~\ref{thm:Sym user} generalizes the results in \cite{ji15order}, and characterizes the expected coded multicast rate achieved in a network composed of symmetric receivers and non-symmetric files with unequal popularities $\qbf$ and lengths $\{\length_n \tau \}$.

%In this symmetric setting, constraint \eqref{eq:constraint relaxed} becomes ${\bar R}^{C}\Big( \qbf ,M ,\pbf, \{\length_n\} \Big )+   {\mathbb E}\Big [ \sum\limits_{k=1}^{K} \widehat R_{k, \dbf} \Big] \leq R$, where ${\bar R}^{C}\Big( \qbf,M, \pbf, \{\length_n\} \Big )$,  the asymptotic expected  coded multicast rate achieved with the  RF-GCC scheme is given next.

\begin{theorem}\label{thm:Sym user}
	In a network with $K$ symmetric receivers and $N$ files, demand distribution   $\qbf$, cache capacity $\mu$ and caching distribution $\pbf$, the asymptotic expected coded multicast rate required to deliver the requested file versions with length $\{\length_{n} \factor \}_{n=1}^{ N}$,  is upper bounded as 
	\begin{equation} \label{eq:demanrate special case}
	{\bar R}^C \Big(\qbf,\mu,\pbf,\{{\length}_{n}\}   \Big)  \leq  \min \bigg\{    
	\bar\Psi^{(1)}\Big(  \qbf,\mu,\pbf,\{{\length}_{n}\}     \Big),   \bar\Psi^{(2)}\Big(\qbf,\mu,\pbf,\{{\length}_{n}\}    \Big)  \bigg\},
	\end{equation}
	\noindent where
	\begin{align}
	& {\bar \Psi}^{(1)}\Big(\qbf,\mu,\pbf,\{{\length}_{n}\}    \Big) = 
	\sum_{i=1}^{N} (\length_{\zeta_i} -  \length_{\zeta_{i-1}}  ) \sum_{\ell=1}^{{\widetilde K}_i}    \binom{{\widetilde K}_i }{\ell}\sum_{  n\in\{\zeta_i,\dots,\zeta_N \}}
	\Gamma_i({\widetilde K}_i,\ell,n) \, \lambda({\widetilde K}_i,\ell,n)  	,   \label{eq:psi 1 special case} \\
	& {\bar \Psi}^{(2)}\Big(\qbf,\mu,\pbf,\{{\length}_{n}\}    \Big) =   \sum_{n=1}^N \Big(1-  (1-q_{n})^K\Big)\Big(    {\length}_{n}  -  p_{n}\mu\Big)	,   \label{eq:psi 2 special case} \\
	&  \lambda(K,\ell,n)=  (   p_n^c  )^{\ell-1} ( 1-p_n^c   )^{K-\ell+1} , \label{eq:lambda special case}  \\
	&   \Gamma_i({\ell,n} ) = \mathbb P\Big( n= \argmax\limits_{t\in\mathcal F_\ell} \;\; (   p_t^c  )^{\ell-1} ( 1-p_t^c   )^{K-\ell+1}    \Big),\label{eq:gammaa special case}\\
	& p^c_{n} =p_{n} \frac{ \mu}{ \length_{n}} , \label{eq: pc is prob special case} 
	\end{align}
	where   $\zeta_1,\dots,\zeta_N$ denotes an ordered permutation of  file indices $\{1,\dots,N  \}$ such that $\length_{\zeta_1}\leq\dots\leq\Omega_{\zeta_N}$, and ${\widetilde K}_i = K \sum_{j=i}^N q_{\zeta_j}$ denotes the expected number of receivers requesting a file with version length larger than $\Omega_i \tau$ bits. In \eqref{eq:gammaa}, $\mathcal F_\ell$ denotes a random set of $\ell$ files chosen from $\{\zeta_i,\dots, \zeta_N\}$ (with replacement) in an i.i.d manner according to  $\qbf$, and $ \Gamma_i(K,{\ell,n} ) $ denotes the probability that file $n \in \mathcal F_\ell$ requested by a set of  $\ell$   receives  maximizes  the quantity $\lambda(K,\ell,n)$. 
\end{theorem}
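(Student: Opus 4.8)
The plan is to decompose each file into \emph{length layers}, reduce the delivery within each layer to a homogeneous-version instance of the RF-GCC problem restricted to the receivers active in that layer, apply the GCC achievability of Theorem~\ref{thm:demand} (equivalently of \cite{ji15order}) layer by layer, and finally average the resulting per-demand bound over the i.i.d.\ demands; a simpler within-file coded delivery gives the alternative bound $\bar\Psi^{(2)}$. First I would order the files so that $\Omega_{\zeta_1}\le\cdots\le\Omega_{\zeta_N}$ and, for $i=1,\dots,N$, call ``layer $i$'' the block of bit positions $(\Omega_{\zeta_{i-1}}\tau,\Omega_{\zeta_i}\tau]$ (with $\Omega_{\zeta_0}=0$). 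File $n$ has content in layer $i$ precisely when $\Omega_n\ge\Omega_{\zeta_i}$, i.e.\ $n\in\{\zeta_i,\dots,\zeta_N\}$, contributing a block of thickness $(\Omega_{\zeta_i}-\Omega_{\zeta_{i-1}})\tau$ bits; since in the symmetric case every receiver caches a uniformly random $p_n^c=p_n\mu/\Omega_n$ fraction of file $n$, and this fraction is the same in every layer of file $n$, the caching inside a layer is, conditioned on the demand, exactly a homogeneous random-fractional placement. Delivering the layers separately and summing gives a per-demand coded rate of at most $\sum_{i=1}^N(\Omega_{\zeta_i}-\Omega_{\zeta_{i-1}})\,r_i(\dbf)$, with $r_i(\dbf)$ the normalized GCC rate for the layer-$i$ subproblem.

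For the per-layer rate and the averaging, note that in layer $i$ only the receivers requesting a file of length at least $\Omega_{\zeta_i}\tau$ participate; a given receiver does so with probability $\rho_i=\sum_{j\ge i}q_{\zeta_j}$, so the number of active receivers is $\mathrm{Binomial}(K,\rho_i)$ with mean $\widetilde K_i=K\rho_i$. Conditioned on the active set and its demand sub-vector, Theorem~\ref{thm:demand} with all version lengths set to the layer thickness bounds $r_i(\dbf)$ by a sum over subsets $\mathcal K_\ell$ of active receivers of $\max_{k\in\mathcal K_\ell}\lambda_i$. Taking the expectation over the demand then follows the accounting of Appendix~\ref{App:general}: using exchangeability of the receivers the subset sum collapses into a binomial coefficient $\binom{\widetilde K_i}{\ell}$, the random active count is replaced by its mean $\widetilde K_i$ (exploiting the monotonicity/concavity of the GCC rate in this count, as in \cite{ji15order}), and the expectation of the $\max$ over which requested file is the greedy maximizer is written as $\sum_{n}\Gamma_i(\widetilde K_i,\ell,n)\,\lambda(\widetilde K_i,\ell,n)$ with $\lambda(K,\ell,n)=(p_n^c)^{\ell-1}(1-p_n^c)^{K-\ell+1}$ and $\Gamma_i$ the argmax probability over an i.i.d.\ $\ell$-sample of requested files; weighting by the layer thickness and summing over $i$ produces $\bar\Psi^{(1)}$. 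The bound $\bar\Psi^{(2)}$ instead treats each requested file on its own: a within-file index code delivers to every requester the $(\Omega_n-p_n\mu)\tau$ bits it is missing, and since file $n$ is requested by at least one of the $K$ i.i.d.\ receivers with probability $1-(1-q_n)^K$, summing over $n$ gives \eqref{eq:psi 2 special case}; the minimum of the two bounds is the claim.

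The delicate step is the demand-averaging in $\bar\Psi^{(1)}$: one must show that replacing the binomial number of layer-$i$-active receivers by its mean $\widetilde K_i$, and decoupling ``which receiver subset'' from ``which requested file maximizes $\lambda$'' through $\Gamma_i$, yields a genuine upper bound, and moreover one strictly tighter than the direct specialization of Theorem~\ref{thm:general} --- which, because it layers by $\Omega^*_k=\max_n\Omega_{k,n}$, degenerates under receiver symmetry into a single layer of thickness $\max_n\Omega_n$ and thus overcharges short files. The gain is exactly the pay-off of symmetry: with receiver-independent caching probabilities $p_n^c$ the GCC weights depend on an active set only through its cardinality, so the rate can be accounted layer-by-layer against the \emph{expected} active population $\widetilde K_i$ rather than against all of $[K]$, and the concentration of the binomial active count makes this substitution essentially lossless as $K$ and $\tau$ grow.
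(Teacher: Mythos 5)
Your proposal follows the same approach as the paper's Appendix~\ref{app:Sym user}: order files by version length, split the conflict graph into $N$ length-indexed subgraphs, bound the independent sets per subgraph using receiver homogeneity to collapse the subset sum into a binomial coefficient, and then take the demand expectation using Jensen's inequality (concavity) to replace the random layer-active receiver count by its mean $\widetilde K_i$, with the $\Gamma_i$ argmax-probability trick from Appendix~\ref{App:general}; the $\bar\Psi^{(2)}$ bound via the GCC$_2$-style uncoded baseline also matches. Your closing observation about why this file-indexed layering beats the receiver-indexed layering of Theorem~\ref{thm:general} under symmetry is exactly the point the theorem statement makes explicit.
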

\begin{proof}
The proof is given in Appendix~\ref{app:Sym user}. %or The proof is given in \cite[Appendix C]{onlineVer}.
\end{proof}

%{\RED   The proof follows from deriving an upper bound on the per-demand coded multicast rate as in the proof of Theorem \ref{thm:demand} for a setting with symmetric receivers, and by taking its expected value over all possible demands. A detailed proof  is given in \cite[Appendix C]{onlineVer}.}

%Theorem~\ref{thm:Sym user} generalizes the results in \cite{ji15order}, and characterizes the expected coded multicast rate achieved in a network composed of symmetric receivers and non-symmetric files with unequal popularities $\qbf$ and lengths $\{\length_n \tau \}$.  %We note that given the symmetry assumptions in this section, Theorem~\ref{thm:Sym user} provides a tighter upper bound on the expected multicast rate compared to the one resulting from  specializing Theorem~\ref{thm:general} to a setting with symmetric receivers.

\section{The CC-CM Scheme Implemented with RF-GCC}\label{sec:RAP-GCCOptimization}

 In this section, we describe how  RF-GCC of  Sec.~\ref{sec:RAP} can be adopted by the CC-CM scheme of Sec.~\ref{sec: Multicast} to fill the receiver caches and to deliver the coded multicast portion of the transmissions in the delivery phase. The RF-GCC scheme is designed to be applicable to the scalable coding-based content delivery setting considered in this paper.  %(e.g., video streaming application or Gaussian sources) due to the inherent successively refinable nature of scalable encoding \cite{Cover}. 
 In fact, in line with Sec.~\ref{sec:RAP}, a version of a file used by RF-GCC is the combination of its base layer and a given number of its successive enhancement layers. %In line with Sec.~\ref{sec:RAP},  given two versions of a file, one is a degraded version of the other one if it is composed of fewer layers. 
 We use the rate upper bounds achieved with RF-GCC, provided in Sec.~\ref{sec:Achievable Rate}, to solve the optimization problem in \eqref{eq: Main Opt} and to characterize the rate-distortion-memory trade-off of the CC-CM scheme.

\subsection{Adopting  RF-GCC for  CC-CM: Scheme Description}\label{subsec: RAPP-GCC with SVC}
%In our proposed CC-CM scheme, in order to implement the caching phase and the portion of the delivery phase corresponding to coded multicast transmissions, we adopt the  RF-GCC scheme described in Sec.~\ref{sec:RAP}. 
%The RF-GCC scheme is applicable to the scalable coding-based content delivery setting considered in this paper  (e.g., video streaming application or Gaussian sources) due to the inherent successively refinable nature of scalable encoding \cite{Cover}. In fact, a version of a file is the combination of its base layer and a given number of its successive enhancement layers. In line with Sec.~\ref{sec:RAP},  given two versions of a file, one is a degraded version of the other one if it is composed of fewer layers. Specifically, the CC-CM scheme adopts the  RF-GCC for the scalable delivery of files as follows:

The CC-CM scheme adopts the  RF-GCC for the scalable delivery of files as follows:
\begin{itemize}
\item For a given set of cache allocations $\{M_{k,n} \}$ and per-receiver coded multicast rates $\{\widetilde R_{k,n} \} $, let $\Omega_{k,n} = M_{k,n} + \widetilde R_{k,n}$, $k=1,\dots,K$ and $n=1,\dots,N$, which we refer to as the {\em storing range} of receiver $k$ for file $n$. The storing range $\Omega_{k,n}$ is the rate with which file $n\in[N]$ is guaranteed to be delivered to receiver $k\in[K]$, upon request, through coded transmissions for any demand $\dbf$.  The number of source-samples $F$ and the storing range $\Omega_{k,n}$ (bits/sample) play the roles of parameter $\tau$ and $\Omega_{k,n}$ described in Sec.~\ref{sec:RAP}, respectively.
 
\item All  versions of the library files are partitioned into equal-length packets of $T$ bits. 
 
%\item[-] As in the  RF-GCC described in Sec.~\ref{sec:RAP}, the cache encoder in CC-CM is characterized by caching distributions $\pbf_k$,  $k=1,\dots,K$.   

\item During the caching phase, receiver $k\in[K]$ selects  
 $M_{k,n} F/T $ distinct packets uniformly at random from the $ \Omega_{k,n} F/T$ packets of version $k$ of file $n\in [N]$, where $\Omega_{k,n}$ determines the range of packets of file $n$ from which receiver $k$ is allowed to cache, hence the name storing range. Then, a packet from version $k$ of file $n$ is cached at receiver $k$ with probability
 \begin{align}
 p^c_{k,n} =   \frac{M_{k,n}}{M_{k,n}+ \widetilde R_{k,n}} ,\label{eq:P cache}
 \end{align} 
 which is in line with \eqref{eq: pc is prob} for $p_{k,n}\mu_k=M_{k,n}$ and $\Omega_{k,n} = M_{k,n}+\widetilde R_{k,n}$. The optimal values of $\{M_{k,n}\}$ and  $\{\widetilde R_{k,n}\}$  
are derived in terms of the rate budget $R$, cache sizes $\{ M_k\}$, and the demand distributions $\{\qbf_k\}$, by solving  \eqref{eq: Het Opt}, which we explain in Sec.~\ref{sec:RAP-GCCOptimization}. 

\item In the delivery phase, for a given demand  $\dbf\in\mathfrak D$, the  sender delivers the remaining $\widetilde R_{k,d_k}F/T$  missing (i.e., not cached) packets from the version requested by receiver receiver $k\in [K]$, via coded transmissions using the GCC scheme described in Sec.~\ref{sec:RAP}. 

\end{itemize}
Finally, the sender utilizes the remaining available rate from the total rate budget $R$ to transmit an additional layer, with rate $\widehat R_{k,{\bf d}}$, of file $d_k$ requested by receiver $k$ via uncoded transmissions. For a given demand $\dbf$, the per-receiver uncoded rates $\{\widehat R_{k,{\bf d}} \}$ can be determined based on a reverse water-filling approach similar to LC-U described in Sec.~\ref{sec: Unicast}. Since Gaussian sources are successively refinable, receiver $k $ is able to successfully recover file $d_k$ with rate $ \Omega_{k,d_k}+\widehat R_{k,{\bf d}}$.

Based on the results in Sec.~\ref{sec:RAP}, as $F\rightarrow\infty$, for any   $\dbf\in\mathfrak D$, the aggregate multicast rate achieved by    CC-CM  is upper bounded by $R^C \Big( \dbf,  \{\mu_k \}, \{\pbf_{k}\},\{ \Omega_{k,n} \}  \Big) +\sum\limits_{k=1}^K\widehat R_{k, \dbf}$, where $R^C\Big( \dbf,$ $ \{\mu_k \}, \{\pbf_{k}\}, \{ \length_{k,n} \}  \Big)$ is the aggregate coded rate achieved by  RF-GCC given in Theorem~\ref{thm:demand}. In Sec.~\ref{subsec:performance of cc-cm}, we use this upper bound to replace the first constraint of optimization  \eqref{eq: Main Opt}, as
\begin{align}
R^C \Big( \dbf,  \{\mu_k \}, \{\pbf_{k}\},\{ \Omega_{k,n} \}  \Big) +\sum\limits_{k=1}^K\widehat R_{k, \dbf} \leq R . \label{eq:constraint}
 \end{align}

\subsection{Discussion}
In this section, we briefly discuss some of the choices we made when designing the caching and delivery phases of  CC-CM  that adopts  RF-GCC. As explained   in  Sec.~\ref{sec: Multicast}, we partition the demand-dependent per-receiver rates $\{R_{k,\dbf}\}$ into two portions: a portion delivered through coded multicast that depends only on individual demands, $\{\widetilde R_{k,d_k}\}$, and another portion delivered through uncoded transmissions that depends on the entire demand, $\{\widehat R_{k,\dbf}\}$, which allows us to analytically evaluate the aggregate coded rate delivered by CC-CM.   
%which results in a more computationally tractable problem compared to the exponentially complex optimization problem given in \eqref{eq: Main Opt}. 
 Introducing demand-independent per-receiver rates $\{\widetilde R_{k,d_k}\}$ allows us to exploit coding opportunities during multicast transmissions while supporting a minimum reconstruction quality for each receiver request. This is achieved via defining the storing range. When adopting the RF caching strategy for  CC-CM, each receiver selects and caches various packets of a file version uniformly at random among a set of packets dictated by its storing range defined in \ref{subsec: RAPP-GCC with SVC}. This random population of the caches is a simple strategy to increase the distribution of distinct packets in the caches across the network, which is key for increasing the coding opportunities in the delivery phase  compared to traditional caching schemes that are based on local file popularity such as the Least Frequently Used (LFU) strategy\footnote{LFU is a local caching policy that, here, leads to all receiver caches having large overlaps,  limiting the coding opportunities.}\cite{ji15order}. Recall that in scalable encoding, an enhancement layer can not be used to improve the video quality without the base layer and all preceding enhancement layers. Hence, packets from a layer of a given file version can be potentially useless if all packets corresponding to its preceding layers are not received in their entirety. Using a caching strategy where receivers fill their caches starting from the lowest layer would limit the coding opportunities during the delivery phase, and result in a lower number of delivered enhancement layers.  However, with random caching only a subset of packets from different layers are available at a receiver. Therefore, due to scalable encoding all packets missing from these layers and preceding layers need to be delivered during the delivery phase in order to prevent packets that are cached  from being futile. To this end, we determine the minimum number of layers  
that we guarantee to fully deliver to each receiver based on the network setting, which  maps to the storing range, i.e., the lowest compression rate with which a file version can be delivered to that receiver, and utilize the remaining rate budget to deliver additional layers through uncoded transmissions by solving an optimization similar to LC-U.

\subsection{Rate-Distortion-Memory Trade-off with CC-CM}\label{subsec:performance of cc-cm}

In this section, our objective is to solve the optimization problem in \eqref {eq: Main Opt}. To this end, we %as discussed in Sec.~\ref{sec: Multicast}, we split the per-receiver rates $\{R_{k,\dbf} \}$ into two portions, $\{\widetilde R_{k,n} \}$ and $\{\widehat R_{k,\dbf} \}$, and 
adopt RF-GCC for the CC-CM scheme as in Sec.~\ref{subsec: RAPP-GCC with SVC}, which is equivalent to replacing the first constraint in   \eqref {eq: Main Opt} with  \eqref{eq:constraint}. Then, the optimal cache allocation $\{M_{k,n}^*\}$, per-receiver coded rates $\{\widetilde R_{k,n}^*\}$, and  per-receiver uncoded rates $\{\widehat R_{k, \dbf}^*\}$ are derived from %using the following optimization problem  
\begin{subequations}\label{eq: Het Opt}
	\begin{alignat}{4}
	& \text{min }
	&&  
	{\mathbb E}  \bigg[\frac{1}{K} \sum_{k=1}^{K}\sigma_{d_{k}}^{2}2^{-2(M_{k, d_{k}}+\widetilde R_{k, d_{k}}+ \widehat R_{k, \dbf})}\bigg]\label{eq: ObjectiveGeneral}\\
	& \text{s.t.}
	&& {  R}^C \Big( \dbf,  \{M_k \},  \{\pbf_{k}\},\{\length_{k,n} \}  \Big) +\sum\limits_{k=1}^K\widehat R_{k, \dbf} \leq R,   \hspace{0.5cm} \forall \dbf \in \mathfrak D \label{eq: Rate d} \\
	&&&  \length_{k,n} =  M_{k,n} +  \widetilde R_{k, n}   , \;\; M_{k,n} =   p_{k,n} M_{k}\hspace{1cm} \forall (k,n) \in [K]  \times [N]     \\
		%&&&  M_{k,n} =   p_{k,n} M_{k}   , \hspace{1.8cm} \forall (k,n) \in [K]  \times [N]     \\
	&&& \sum_{k=1}^{K}M_{k,n} \leq M_{k}, \hspace{1.7cm} \forall k \in [K]\label{eq: CacheGeneral}\\
	&&& M_{k,n}, \widetilde R_{k,n}, \widehat R_{k, \dbf}\geq 0,  \hspace{1cm}  \forall (k,n, \dbf) \in [K]  \times [N] \times \mathfrak D \label{eq: VariableGeneral}
	\end{alignat}
\end{subequations}

The optimization problem in \eqref{eq: Het Opt} is highly non-convex and has an exponential number of constraints due to \eqref{eq: Rate d}, which depends on the cardinality of $\mathfrak D$. We simplify the solution by relaxing \eqref{eq: Het Opt} and allowing the  rate constraint to be satisfied on average over all demands, and we replace \eqref{eq: Rate d} with
%Hence, we replace \eqref{eq: Rate d} with the following constraint
\begin{equation}\label{eq:constraint relaxed}
{ \bar R}^C \Big(\{\qbf_{k}\}, \{ M_k\}, \{\pbf_{k}\},\{ \length_{k,n} \}  \Big) + %\sum_{\dbf \in \mathfrak D}\Pi_{\dbf} 
{\mathbb E}\Big [\sum\limits_{k=1}^{K} \widehat R_{k, \dbf}\Big]\leq R,
\end{equation}
where ${ \bar R}^C \Big(\{\qbf_{k}\},\{ M_k\}, \{\pbf_{k}\},\{ \length_{k,n} \}  \Big)$ is given in Theorem \ref{thm:general}. 
%In the remainder of this section, 
In the following, we analyze the solution to the relaxed version of \eqref{eq: Het Opt} for settings with symmetry across receivers or files. % in the network.

\subsubsection{\bf Symmetry Across Receivers}~\label{sec:Symmetrical cc-cm}

%Consider a network with symmetric receivers where all receivers have equal-size caches and request files according to the same demand distribution, i.e. $M_{k} = M$, $q_{k,n} = q_{n}$,  for all $(k,n) \in [k]\times[N]$.  In this network, it is immediate to verify that the optimal caching distributions $\{\pbf_k \}$, and the corresponding cache allocations $\{ M_{k,n}\}$ are uniform across all the receivers, i.e., $p_{k,n} = p_n$ and $M_{k,n} = M_n$,  for all $(k,n) \in [k]\times[N]$.  Furthermore, all receivers have the same storing range for file $n\in[N]$, i.e., $\length_{k,n} = \length_{n}$    or equivalently, they have  the same per-receiver coded rate $\widetilde R_{k,n} =\widetilde R_{n}$. In this symmetric setting, constraint \eqref{eq:constraint relaxed} becomes ${\bar R}^{C}\Big( \qbf ,M ,\pbf, \{\length_n\} \Big )+   {\mathbb E}\Big [ \sum\limits_{k=1}^{K} \widehat R_{k, \dbf} \Big] \leq R$,  where ${\bar R}^{C}\Big( \qbf,M, \pbf, \{\length_n\} \Big )$,  the asymptotic expected  coded multicast rate achieved with the  RF-GCC scheme is in Theorem~\ref{thm:Sym user}. 
  
As described in Sec.~\ref{subsec:Symmetrical rec}, for symmetric receivers with equal-size caches and the same demand distribution, all receivers have the same storing range for file $n\in[N]$, i.e., $\length_{k,n} = \length_{n}$,  or equivalently, they have  the same per-receiver coded rate $\widetilde R_{k,n} =\widetilde R_{n}$. The asymptotic expected  coded multicast rate achieved with the  RF-GCC scheme in this setting is given in Theorem~\ref{thm:Sym user}. 

The performance of   CC-CM  depends on both the distortion-rate function of the sources according to which the files are generated and the file popularities. In order to see this dependency consider the following two cases. Consider a setting where files are generated in an i.i.d. fashion according to the same source distribution, and hence, they have the same distortion-rate function. In this case,   CC-CM    prioritizes the caching of more popular files. In order to simplify the analysis, as in \cite{ji15order}, we could use a caching distribution such that a set of the most popular files are cached with uniform probability, while all other less popular files are not cached at all. Using this caching policy for  RF-GCC  is proved in \cite{ji15order} to result in performance that is within a constant factor of the optimal one.    
 Alternatively, consider a setting where all files are equally popular but have different distortion-rate functions, which  corresponds to different variances $\{\sigma_n^2 \}$ for Gaussian sources. In this case,    CC-CM   prioritizes the caching of files that have higher distortion. Similarly to \cite{ji15order}, one could consider a simplified caching strategy, where a set of the files that are generated from sources with larger variance are cached with uniform probability, while all other files generated from sources with smaller variance are not cached at all.

 In line with \cite{ji15order}, we propose a simplified caching placement that takes into account both the popularity of the files and their distortion-rate functions. Let us divide the library files into two groups $\mathcal G_1$ and $\mathcal G_2$, with sizes $\widetilde N$ and $N-\widetilde N$, respectively, and assign fixed storing ranges $ \widetilde\Omega_1$ and $ \widetilde\Omega_2$ to all version of the files in groups $\mathcal G_1$ and $\mathcal G_2$, respectively. Then, the receivers fill their caches according to a truncated uniform caching distribution given as follows 
 \vspace{-0.7cm}
\begin{multicols}{2}
 \begin{equation}
p_n = \begin{cases}    {1}/ {\widetilde N} ,  &\hspace{0.5cm}  n \in  \mathcal G_1 \\ 0, \hspace{2cm}&\hspace{0.5cm}  n \in  \mathcal G_2 \end{cases},\notag
\end{equation}

 \begin{equation}
\Omega_n = \begin{cases} \widetilde\Omega_1,  &\hspace{0.5cm}  n \in  \mathcal G_1 \\ 
\widetilde\Omega_2 , \hspace{2cm}&\hspace{0.5cm}  n \in  \mathcal G_2 \end{cases},
\end{equation} 
\end{multicols}
where the cut-off index $\widetilde N \geq M$ and values $\widetilde\Omega_1$ and $\widetilde\Omega_2$ are a function of the system parameters, and are derived from solving   \eqref{eq: RLFU}. We refer to the resulting caching strategy as the Truncated Random Fractional (TRF) caching.  Intuitively, it is more likely that group $\mathcal G_1$ contains the more popular files that are also  generated from sources with higher variances. Group $\mathcal G_2$ contains all other files that are less popular and that are generated from sources with lower variances.  Then, 
\vspace{-1.8cm}
\begin{multicols}{2}
\begin{equation}\nonumber
M_{n} = \begin{cases} \widetilde M =M/ {\widetilde N} &\hspace{0.1cm}  n \in  \mathcal G_1  \\ 0 \hspace{1cm}&\hspace{0.2cm}  n \in  \mathcal G_2  \end{cases},
\end{equation}

\begin{equation}\nonumber
\widetilde R_{n}  = \begin{cases} \widetilde R_1 = \widetilde \Omega_1 -\widetilde M  &\hspace{0.2cm} n \in  \mathcal G_1 \\
 \widetilde R_2 =  \widetilde \Omega_2  \hspace{1cm}&\hspace{0.2cm}  n \in  \mathcal G_2  \end{cases}, 
\end{equation}
\end{multicols}
  and from \eqref{eq:P cache}, a packet of file $n\in[N]$ is cached at any receiver with probability 
\begin{equation}
p_n^c = \begin{cases} {\widetilde M}/ (\widetilde M+\widetilde R_1)  &\hspace{0.5cm}  n \in  \mathcal G_1 \\ 0 \hspace{2cm}&\hspace{0.5cm}  n \in  \mathcal G_2   \end{cases}.
\end{equation}

The optimal values for $\widetilde N$ (and hence $\widetilde M$), $\widetilde R_1$, $\widetilde R_2$ and $\{ \widehat R_{k,\dbf}\}$ are derived from % the following. % optimization problem.
\begin{subequations}\label{eq: RLFU}
	\begin{alignat}{4}
	&{\text{min}}
	\hspace{0.2cm} && \sum_{\dbf \in \mathfrak D} \Pi_{\dbf} \bigg(\frac{1}{K} \sum_{k=1}^{K}\sigma_{d_k}^{2}2^{-2(M_{d_k}+\widetilde R_{d_k} + \widehat R_{k, \dbf})} \bigg) \\
	& \text{s.t.}
	&& \min \bigg\{    
	\widetilde  \Psi^{(1)}\Big(  M,  \widetilde N,  \widetilde R_1,  \widetilde R_2, \widetilde G \Big),   \bar\Psi^{(2)}\Big( \qbf,\{ \length_{n}\}   \Big) \bigg\}  +  \sum_{\dbf \in \mathfrak D} \Pi_{\dbf}\sum\limits_{k=1}^{K} \widehat R_{k, \dbf} \leq
	R,\label{subeq:RateRLFU }\\
	&&& 	\widetilde  \Psi^{(1)}\Big(  M,  \widetilde N,  \widetilde R_1,  \widetilde R_2, \widetilde G  \Big) = \frac{\widetilde R_1 (  M + \widetilde N\widetilde R_1)}{ M}  \left (1-\left (\frac{\widetilde N\widetilde R_1}{  M + \widetilde N\widetilde R_1} \right)^{K\widetilde G } \right ) + K (1-\widetilde{G})\widetilde R_2 \nonumber \\
	%&&& \widetilde G  = \sum\limits_{n\in\mathcal G_1}   q_{n} \\
	&&& \widetilde N, \widetilde R_1,\widetilde R_2, \widehat R_{k, \dbf}\geq 0,  \hspace{0.3cm}  \forall (k, \dbf) \in [K]  \times \mathfrak D
	\end{alignat}
\end{subequations}
where  $\widetilde G  = \sum\limits_{n\in\mathcal G_1}   q_{n}$, and $\bar\Psi^{(2)}\Big( {\bar \Psi}^{(2)}\Big(\qbf,\mu,\pbf,\{{\length}_{n}\}    \Big)   \Big)$ is defined in \eqref{eq:psi 2 special case}.  The first term in \eqref{subeq:RateRLFU },  \linebreak $\widetilde  \Psi^{(1)}\Big(  M,  \widetilde N,  \widetilde R_1,  \widetilde R_2, \widetilde G \Big)$, is the expected coded  multicast rate achieved by  TRF-GCC for files in group $\mathcal G_1$, derived using Theorem~\ref{thm:Sym user} and by applying Jensen's inequality as explained in \cite[Appendix B]{ji15order}. The second term is the expected uncoded  multicast rate for files in group $\mathcal G_2$, from which no packet has been cached in the network. 
We refer to the resulting scheme as the  CC-CM scheme that adopts TRF-GCC.

\subsubsection{\bf  Symmetry Across Receivers and Files}~\label{sec: Uniform}

The simplest network setting consists of all receivers having equal-size caches, uniform demand distributions, and all files (sources) having the same distribution, i.e. 
$$M_{k} = M,\; q_{k,n} = \frac{1}{N}, \;\sigma_{n}^{2} = \sigma^{2} , \quad \text{for all } (k,n) \in [K]\times[N].$$
Due to the symmetry, it can be immediately verified that both the optimization problem in \eqref{eq: Main Opt} and the relaxed version of \eqref{eq: Het Opt} result in uniform caching distribution $p_{k,n} = \frac{1}{N}$, and a unique storing range $\Omega_{k,n} = \widetilde \Omega$ for all $(k,n)\in[K]\times[N]$. In this setting, $M_{k,n}=\widetilde M$ and $R_{k, \dbf}=  \widetilde R_{k, d_{k}}=\widetilde R$, and therefore from \eqref{eq:P cache} we have $p^c =  {\widetilde M}/{(\widetilde M+\widetilde R)}$. It is immediate to see that in this setting the optimal solution assigns  
$\widehat R_{k, \dbf}=0$ for any demand $\dbf\in\mathfrak D$, and that we only need to account for the per-receiver coded rates. The optimal values of $\widetilde M^*$ and $\widetilde R^*$ are derived using the relaxed version of problem \eqref{eq: Het Opt} by further particularizing the expected coded multicast rate given in Theorem~\ref{thm:Sym user} to symmetric files, as follows
\begin{equation}
\begin{aligned}
&{\text{min}}
& & \sigma^{2}2^{-2(\widetilde M+\widetilde R)} \\
& \text{s.t.}
& &  \Big(\widetilde M+\widetilde R\Big)\, \min\bigg\{  \frac{\widetilde R}{\widetilde M} \bigg (1- \Big (\frac{\widetilde R}{\widetilde M+\widetilde R}  \Big)^K \bigg)   ,\;
\bigg(1-\Big(1-\frac{1}{N}\Big)^K\bigg)^N \bigg\}
\leq R,\\
&&& \widetilde M \leq M, \;\;\; \widetilde M, \widetilde R \geq 0. 
%&&& \widetilde M, \widetilde R \geq 0.
\end{aligned}\label{eq: Uniform}
\end{equation}

% ********************************************************************************************************
% SIMULATIONS
% ********************************************************************************************************
 
\section{Numerical Results}~\label{sec:Simulations}
In this section, we numerically compare the performance of the LC-U and CC-CM content delivery schemes proposed in Secs.~\ref{sec: Unicast} and \ref{sec: Multicast} using the asymptotic closed-form results ($F\rightarrow\infty$) provided in  Secs.~\ref{sec:Symmetrical cc-cm} and \ref{sec: Uniform}. We consider a network composed of $K= 20$ receivers and a library with $N = 100$ files, which are requested by all receivers according to a Zipf distribution $\qbf$ with parameter $\alpha$, where $q_{n} = {n^{-\alpha}}/{\sum_{n= 1}^{N} n^{-\alpha}}$ for $n =1,\dots,N$.
%$$q_{n} = \frac{n^{-\alpha}}{\sum_{n= 1}^{N} n^{-\alpha}}   \hspace{2cm}   n =1,\dots,N.$$

Fig. \ref{fig:asymptotic} (a), displays the expected distortion achieved with the LC-U scheme {(exact)} and the CC-CM scheme {(upper bound)} using TRF-GCC. In order to reduce the complexity of problem \eqref{eq: RLFU}, we assume that the per-receiver uncoded rates $\{\widehat R_{k,\dbf}\}$ are independent of the demand and only depend on the file indices. Therefore, the CC-CM curve shown in Fig. \ref{fig:asymptotic} (a) provides an upper bound on the one resulting from solving  \eqref{eq: RLFU}. It is assumed that all receivers have the same cache size, $\alpha = 0.6$, and  $\sigma_{n}^{2}$ is uniformly distributed in the interval $[0.7,1.6]$. The distortions have been plotted (on a logarithmic scale) for rate budget values of $R \in \{2,5,8\}$  bits/sample as receiver cache sizes vary from $5$ to $100$ bits/sample.
As expected, CC-CM significantly outperforms LC-U  in terms of expected distortion. This means that for a given rate budget  $R$, CC-CM is able to deliver higher-rate file versions to the receivers, reducing their reconstruction distortions. Specifically, for rate budget $R= 2$ and cache size $M=50$, CC-CM achieves a $2.1\times$ reduction in expected distortion compared to LC-U, and for larger  rate budget $R=8$ the gain of CC-CM increases to $5.4$ for the same cache size $M=50$.

\begin{figure} %[h!]%\centering
	\begin{subfigure}{0.45\linewidth}\centering
		\includegraphics[width=0.87\linewidth]{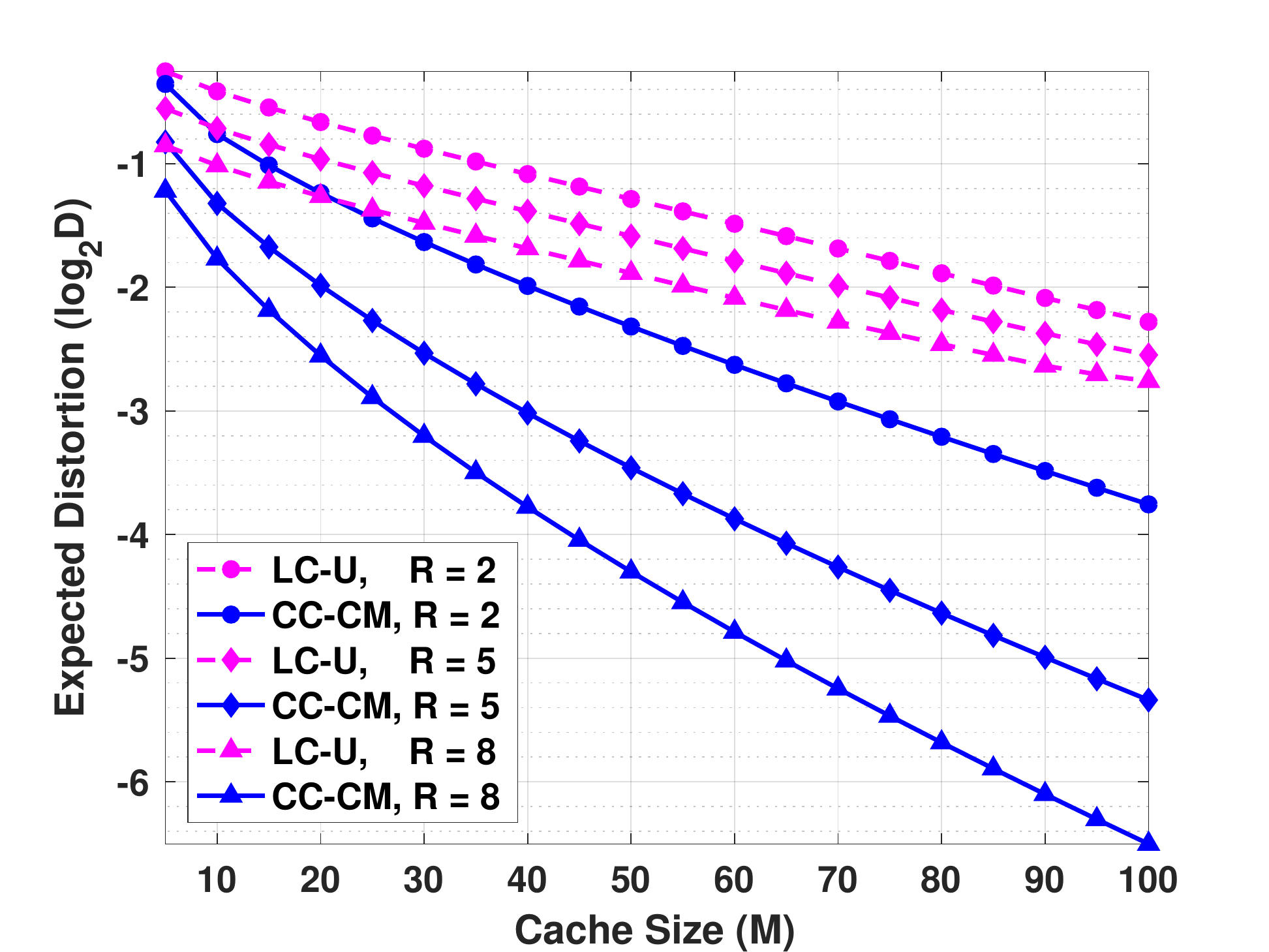}
		\subcaption{}
	\end{subfigure}\hspace*{\fill}
	\begin{subfigure}{0.45\linewidth}\centering
		\includegraphics[width=0.87\linewidth]{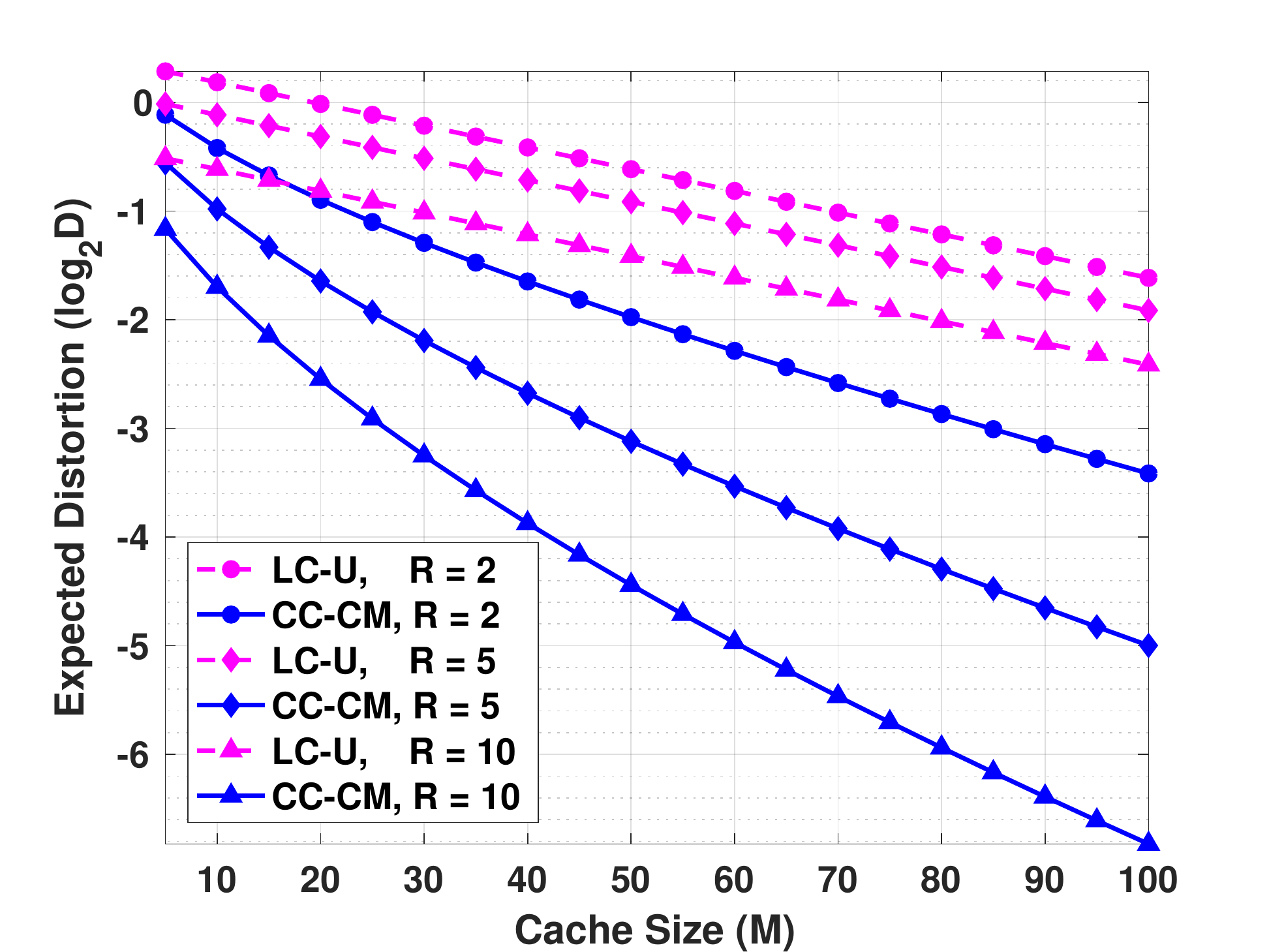}
				\subcaption{}
	\end{subfigure}\hspace*{\fill}
	\caption{Distortion-memory trade-off in a network with $K = 20$ receivers, $N= 100$ files, and Zipf demand distribution with parameter (a) $\alpha = 0.6$, and (b) $\alpha = 0$ (uniform demands).}
%, analyzed in Sec. \ref{sec:Symmetrical}.
\label{fig:asymptotic}
\end{figure}

In Fig. \ref{fig:asymptotic} (b), we consider a homogeneous network with uniform file popularity ($\alpha = 0$) and $\sigma_{n}^{2} = 1.5 $, for all $ n\in [N]$, $N = 100$. The expected distortions achieved for LC-U and CC-CM (using  RF-GCC) are plotted for the  rate budget values of $R \in \{2,5,10\}$ bits/sample as receiver cache sizes vary from $5$ to $100$ bits/sample. It is observed that the gains achieved by CC-CM are even higher in this scenario, which result from the increased coded multicast opportunities that arise when files have uniform popularity \cite{ji15order}. In this case, for $R=10$ and $M=50$, the expected distortion achieved with CC-CM is $9.5$ times less than with LC-U, and the improvement factor increase up to $14\times$  for cache capacity $M=70$.

% ********************************************************************************************************
% CONCLUSION
% ********************************************************************************************************

\section{Conclusion}~\label{sec: Conclusion}  
In this paper, we have investigated the use of caching in broadcast networks for enhancing video streaming quality, or in a more abstract sense, reducing source distortion. During low traffic hours, receivers cache low rate versions of the video files they are interested in, and during high traffic hours further enhancement layers are delivered to enhance the video playback quality. We have proposed two cache-aided content delivery schemes that differ in performance, computational complexity and required coding overhead. We have shown that while local caching and unicast transmission can be used to improve reconstruction distortion without the need of global coordination, the use of cooperative caching and coded multicast transmission is able to provide $10\times$ improvement in expected achievable distortion in a network with $20$ users and $100$ files by delivering more enhancement video layers with the same available broadcast resources. We have characterized the distortion-memory trade-offs for both schemes, and our numerical results have confirmed the gains that can be achieved by exploiting coding across the cached and requested content during multicast transmissions. As a subproblem to our main problem, we have generalized the setting in \cite{ji15order} to one that delivers different versions of library files to the users, thereby providing a solution to the lossy caching problem studied in \cite{yang2018coded}.

 %the setting considered in lossy cache-aided networks in the literature to networks with heterogeneity across files as well as users, and we have characterized a new upper bound on the rate-memory trade-off for a given set of preset user distortion requirements.

% ********************************************************************************************************
% Appendix
% ********************************************************************************************************

% ********************************************************************************************************
% Proofs
% ********************************************************************************************************
\begin{appendices}
	
\section{Proof of Theorem \ref{thm:demand}}\label{App:demand}
The proof is based on a generalization of the proof in \cite[Appendix A]{ji15order} to a setting where receivers have different cache sizes, different file preferences, and where they request degraded versions of the same file. 
We upper bound the asymptotic ($\factor \rightarrow\infty$) coded multicast rate achieved by the GCC algorithm.  %Let $R_{\dbf}^{GCC}$ denote the aggregate coded multicast rate required to deliver a version of size $\Omega_{k,d_k}F$ bits to receiver $k\in[N]$ for demand realization $\dbf$ using the GCC algorithm. 
 As described in \cite[Sec III-B]{ji15order}, the GCC algorithm applies two greedy graph coloring-based  algorithms, GCC$_1$ and GCC$_2$, to the index coding conflict graph $\mathcal H_{\Cbf,\Qbf}$, constructed based on the  packet-level cache configuration $\Cbf$ and demand realization $\Qbf$. Then, GCC determines the total number of distinct colors assigned by each algorithm to the graph vertices, and selects the  coloring that results in a smaller number of distinct colors. %We upper bound the rate achieved by  algorithms GCC$_1$ and GCC$_2$ following the same procedure as in \cite[Appendix A]{ji15order}.

%\subsection{Aggregate coded multicast rate for demand $\dbf$ achieved by  GCC$_1$:}
\subsection{Coded multicast rate achieved by  GCC$_1$ for demand $\dbf$:}
For a given vertex $v$ in the conflict graph $\mathcal H_{\Cbf,\Qbf}$, corresponding to packet $\alpha(v)$ requested by receiver $\beta(v)$, we refer to the unordered set of receivers $\{\beta(v), \eta(v)\}$ as the {\em receiver label} of $v$, which corresponds to the set of receivers either requesting or caching packet $\alpha(v)$. Note that by definition of the conflict graph, two vertices with the same receiver label are not connected via an edge, i.e., they do not interfere. Let $\mathcal J(\Cbf,\Qbf)$  denote the number of distinct colors assigned by algorithm GCC$_1$ to graph $\mathcal H_{\Cbf,\Qbf}$, which, by definition, is the number of independent sets\footnote{An independent set is a set of vertices in a graph, no two
of which are adjacent.} selected by the algorithm.  By construction,  GCC$_1$ generates independent sets that are composed of vertices with the same receiver label. %each of which is associated to a given receiver label.  
  We upper bound the number of independent sets in $\mathcal H_{\Cbf,\Qbf}$ by first splitting the graph into $K$ subgraphs, and upper bounding  $\mathcal J(\Cbf,\Qbf)$  with the sum of the number of independent sets found by GCC$_1$ in each of the $K$ subgraphs. For demand $\dbf$, let the ordered set $\order_1,\dots,\order_K$ denote a permutation of receiver indices $\{1,\dots,K \}$ such that $\length_{\order_1,d_{\order_1}}\leq\dots\leq\Omega_{\order_K, d_{\order_K}}$. Then, $\mathcal H_{\Cbf,\Qbf}$ is split into possibly $K$ subgraphs such that subgraph $i\in[K]$ is composed of all the vertices in $\Qbf$, denoted by $\Vc^{(i)}$, {that represent the requested packets that belong to the portion of files from bit $\length_{\order_{i-1},d_{\order_{i-1}}}   \factor$ to bit $\length_{\order_i,d_{\order_i}}\factor$ (Indexed from the beginning of a file) demanded by receivers $\{\order_i,\dots,\order_K\}$. Note that the first subgraph corresponding to $i=1$ is composed of all vertices that represent requested packets from the first $ \length_{\order_1,d_{\order_1}} \factor$ bits of all files in demand $\dbf$, and therefore we define $\length_{\order_0,d_{\order_0}} =0$. Subgraph $i$ is empty, i.e., has no vertices and edges, if $\length_{\order_{i-1},d_{\order_{i-1}}} = \length_{\order_i,d_{\order_i}}$, and consequently $\mathcal V^{(i)} = \emptyset$. By construction, subgraph $i$ only contains packets of files requested by receivers $\{\order_i,\dots,\order_K\}$, and after coloring graph $i$ there are no remaining packets requested by receiver $\order_i$ that need to be delivered.} Let us denote the number of independent sets in subgraph $i\in[K]$ by $\mathcal J_i(\Cbf,\Qbf)$.

We find an upper bound on $\mathcal J_i(\Cbf,\Qbf)$, $i\in[K]$, proceeding as in  \cite[Appendix A]{ji15order},  by enumerating all possible receiver labels, and by further upper bounding the number of independent sets that GCC$_1$ generates for each receiver label. {We define $\eta_i(v)$ as the set of all receivers in $\{\order_{i},\dots,\order_{K}\}$ which have cached packet $\alpha(v)$ corresponding to vertex $v\in\mathcal V^{(i)}$; therefore, $\eta_i(v)=\eta(v)\setminus\{\order_i,\dots,\order_K\}$. Let  $\mathcal K_{\ell} \subseteq \{\order_{i},\dots,\order_{K}\}$ denote a set of $\ell\in\{1,\dots, K-i+1  \}$ receivers,} and let $\mathcal J_{\Cbf,\Qbf}^{(i)}(\mathcal K_{\ell})$ denote the number of independent sets generated by GCC$_1$ with receiver label $\mathcal K_{\ell} $ for packet-level demand $\Qbf$ corresponding to subgraph $i$ with vertex set $\Vc^{(i)}$. As stated in \cite[Appendix A]{ji15order}, a necessary condition for the existence of an independent set with receiver label $\mathcal K_{\ell} = \{\beta(v), \eta_i(v)\} $ is that for any receiver $k\in \mathcal K_\ell$, there exist a vertex $v\in\Vc^{(i)}$ such that: 1) $\beta(v) = k$, i.e., receiver $k$ is requesting packet $\alpha(v)$, and 2) $\eta_i(v) = \mathcal K_{\ell}\setminus \{k\}$, i.e.,  $\alpha(v)$ is cached by all receivers in $\mathcal K_{\ell}\setminus \{k\}$, and not by any other receiver in $\{\order_i,\dots,\order_{K}\} $. Then, for a given $\Cbf$ and $\Qbf$, the number of generated independent sets  becomes
\begin{align}
\mathcal J_i(\Cbf,\Qbf) = \sum_{\ell=1}^{ K-i+1} \sum_{\mathcal K_{\ell} \subseteq \{\order_i,\dots,\order_{K}\}} \mathcal J_{\Cbf,\Qbf}^{(i)}(\mathcal K_{\ell}),\label{eq:indic}
\end{align}
%with  
\begin{flalign}
 \text{with} \hspace{4.5cm} \mathcal J_{\Cbf,\Qbf}^{(i)}(\mathcal K_{\ell}) &= \max\limits_{k\in\mathcal K_{\ell}}  \sum\limits_{ \substack{v \in \Vc^{(i)}:\\ \beta(v)=k} } \mathbbm{1}  \Big\{\eta_i(v) = \mathcal K_{\ell}\setminus \{k\}  \Big\},&\label{eq:indic2}
\end{flalign}
where %the indicator function  
$\mathbbm{1}  \Big\{\eta_i(v) = \mathcal K_{\ell}\setminus \{k\}\Big\}$ is a random variable, whose expected value gives the probability that vertex $v$ corresponding to file $d_k$ requested by receiver $k$  belongs to an independent set associated with receiver label $\mathcal K_\ell$. In other words, it indicates whether packet $\alpha(v)$ can be encoded into a linear codeword intended for all the receivers in $\mathcal K_\ell$.  For any vertex $v\in\mathcal V^{(i)}$, the indicator function $ Y_{\mathcal K_\ell,k} \triangleq \mathbbm{1}\Big\{\eta_i(v) = \mathcal K_{\ell}\setminus \{k\}\Big\}$ takes value 1 in the event that packet $\alpha(v)$ is cached at  all the receivers in $\mathcal K_\ell\setminus \{k\}$, and is not cached at any of the receivers in $\{\order_i,\dots,\order_{K}\}\setminus \mathcal K_\ell$. $Y_{\mathcal K_\ell,k}$ is a Bernoulli random variable with parameter
%Since  any receiver $u\in \mathcal [K]$ caches the packets from version $u$ of file $n\in[N]$ with probability $p_{u,n}$, this event occurs with probability $\phi (\mathcal K_\ell,k,d_k) $.
\begin{equation}
{\phi_i }(\mathcal K_\ell,k,d_k) \triangleq \prod\limits_{u\in \mathcal{K}_{\ell}\backslash \{k\}}  p^c_{u,d_{k}}  \prod\limits_{u\in { \{\order_i,\dots,\order_{K}\}}\setminus  \mathcal{K}_{\ell}}{ (1-p^c_{u,d_{k}})},\label{eq: phi app}
\end{equation}
where $p^c_{k,n}$ denotes the probability that a packet from version $k$ of file $n\in[N]$ is cached at receiver $k\in[K]$, and is given by
\begin{align}
p^c_{k,n} ={ \binom{ \Omega_{k,n} F/T-1 }{ M_{k,n} F/T-1 }  }\bigg/{ \binom {\Omega_{k,n} F/T}{ M_{k,n} F/T}   } =  \frac{M_{k,n}}{M_{k,n}+ \widetilde R_{k,n}}  = p_{k,n }\frac{M_{k}}{\Omega_{k,n}} .
\end{align}

Similar to \cite[Appendix A]{ji15order}, it can be is shown 
% in Lemma~\ref{lem:lemma1} in Appendix~\ref{App:lemma1} 
that as $\factor\rightarrow \infty$ with fixed $T$, we have
\begin{equation}
\lim\limits_{\factor/T \rightarrow\infty} \mathbb{P} \bigg(\bigg| \frac{Y_{\mathcal K_\ell,k} }{(\length_{\order_i,d_{\order_i}} -\length_{\order_{i-1},d_{\order_{i-1}}} ) \factor/T}  \, -\,  (1-p^c_{u,d_k})  \phi_i (\mathcal K_\ell,k,d_k) \bigg| \leq \epsilon
\bigg) = 1.
\end{equation}
% 
%
%\begin{align}
%\sum\limits_{ \substack{v \in \Vc^*:\\ \beta(v)=k}  } \mathbbm{1}  \Big\{\eta(v) = \mathcal K_{\ell}\setminus \{k\}\Big\}  =  (1-p_{k,d_{k}} )  \phi(\mathcal K_\ell,k,d_k) \length_{k^*,d_{k^*}} \frac{\factor}{T} + o(\factor)  .
%\end{align}
%Hence, as $\factor\rightarrow\infty$, the expected number of independent sets generated with receiver label $\mathcal K_\ell$ is upper bounded by
As $\factor\rightarrow\infty$, the expected number of independent sets %generated 
with label $\mathcal K_\ell$ is upper bounded by
\begin{align}
 {\mathbb E}_{\Cbf}\bigg[   \mathcal J_{\Cbf,\Qbf}^{(i)}(\mathcal K_{\ell})     \Big|\Cbf  \bigg] &=  {\mathbb E}_{\Cbf}\bigg[          \max\limits_{k\in\mathcal K_{\ell}}  \sum\limits_{ \substack{v \in \Vc^{(i)}:\\ \beta(v)=k} } \mathbbm{1}  \Big\{\eta_i(v) = \mathcal K_{\ell}\setminus \{k\}\Big\}                    \Big|\Cbf  \bigg] \notag\\
& \leq   \max\limits_{k\in\mathcal K_{\ell}}   \Big\{
(1-p^c_{k,d_{k}})   \phi_i(\mathcal K_\ell,k,d_k)   \Big\}  (\length_{\order_i,d_{\order_i}} -\length_{\order_{i-1},d_{\order_{i-1}}} )  \frac{\factor}{T} .
\label{eq:J1}
\end{align}

Therefore, an upper bound on the asymptotic coded multicast rate for demand $\dbf$ and a given set of caching distributions $\{\pbf_k \}_{k=1}^K$, can be derived from \eqref{eq:indic} and \eqref{eq:J1} %and \eqref{eq:J2} 
as follows
\begin{align}
\Psi_{\dbf}^{(1)}\Big(  \{\mu_k\}  , \{\pbf_{k}\},& \{\length_{k,n}  \}\Big)
\triangleq
\frac{1}{\factor/T}\,{\mathbb E}_{\Cbf}  \sum_{i=1}^K \Big[   \mathcal J_i(\Cbf,\Qbf)    \Big|\Cbf \Big]\notag\\
 &\leq 
 \sum_{i=1}^K \sum_{\ell=1}^{K-i+1} \sum_{\mathcal K_{\ell} \subseteq\{\order_i,\dots,\order_{K}\}} \; (\length_{\order_i,d_{\order_i}} -\length_{\order_{i-1},d_{\order_{i-1}}} )  \max\limits_{k\in\mathcal K_{\ell}} \; 
(1-p^c_{k,d_{k}} )   \phi_i (\mathcal K_\ell,k,d_k)    \notag .
\end{align}

\subsection{Coded multicast rate achieved by  GCC$_2$ for demand $\dbf$:}
Algorithm GCC$_2$ corresponds to uncoded multicast transmissions. As described in \cite[Sec III-B]{ji15order}, GCC$_2$ randomly selects a vertex $v$ in the conflict graph $\mathcal H_{\Cbf,\Qbf}$ and  generates independent sets composed of all vertices representing the same packet $\alpha(v)$ represented by vertex $v$. Then, it assigns the same color to all the vertices in each independent set. This corresponds to transmitting a total number of packets equal to the number of distinct requested packets. In order to evaluate this value for a given set of cache sizes $\{\mu_k\}_{k=1}^K$, we upper bound it with the number of packets that need to be delivered in a scheme where receiver $k\in[K]$ has cached the first $p_{k,d_k}\mu_k\,{\tau}/{T}$ packets from the total $\length_{k,d_k}\,{\tau}/{T}$ packets  of version  $k$ of file $d_k\in[N]$. In this case, for a requested file $n\ni\dbf$ the longest requested version of file $n$, i.e., $\argmax\limits_{k:d_k = n} \, \length_{k,d_k}$, needs to be transmitted. Given that receivers have heterogeneous cache sizes and caching distributions, only $\argmin\limits_{k:d_k = n} \, p_{k,d_k}\mu_k \, {\tau}/{T}$ packets of this file have been cached by all receivers requesting this file, and therefore, the multicast rate for demand $\dbf$ is upper bounded by
\begin{align}
\Psi_{\dbf}^{(2)}\Big(  \{\mu_k\}  , \{\pbf_{k}\}, \{\length_{k,n}  \} \Big)  \triangleq  \sum_{n =1}^N  \mathbbm{1} \{n \ni \dbf\} \Big(\max\limits_{k:d_k = n}    \,\Omega_{k,n}  - \min\limits_{k:d_k = n}    \,p_{k,n}\mu_k    \Big)
 . \label{eq:GCC2 app1}
\end{align}

% ********************************************************************************************************
% Proof Theorem 2
% ********************************************************************************************************

\section{Proof of Theorem \ref{thm:general}}\label{App:general}
We derive an upper bound on the  expected coded multicast rate required to deliver a version of file $n\in[N]$ with rate $\length_{k,n}$ to receiver $k\in[K]$, 
by  taking the expected value of the rate given Theorem~\ref{thm:demand}, and derived in Appendix \ref {App:demand}, over all possible demands $\dbf\in\mathfrak D$.

\subsection{Expected coded multicast rate achieved by  GCC$_1$}
Let $\lambda_i(\mathcal K_\ell,k,n ) \eqdef (1-p^c_{k,n}) \phi_i(\mathcal K_\ell,k,n ) $ for $\phi_i(\mathcal K_\ell,k,n ) $ defined in \eqref{eq: phi app}, then by taking the expectation of the rate in Theorem~\ref{thm:demand}, we have
\begin{align}
\mathbb E \Big[ \Psi_{\dbf}^{(1)}\Big(  \{\mu_k \},  &\{\pbf_{k} \}, \{\length_{k,n}\} \Big)   \Big]   
=\mathbb E \bigg[ \sum_{i=1}^{K} \sum_{\ell=1}^{K-i+1} \sum_{\mathcal K_{\ell} \subseteq \{\order_{i},\dots,\order_K  \}} \Big(\length_{\order_i,d_{\order_i}} -  \length_{\order_{i-1},d_{\order_{i-1}}}  \Big) \max\limits_{k \in \mathcal K_\ell}   \lambda_i(\mathcal K_\ell,k,d_k)    \bigg] 
\notag\\%\label{eq:avg main} \\
& = \sum_{i=1}^{K} \sum_{\ell=1}^{K-i+1} \mathbb E \bigg[\sum_{\mathcal K_{\ell} \subseteq \{\order_{i},\dots,\order_K  \}} \Big(\length_{\order_i,d_{\order_i}} -  \length_{\order_{i-1},d_{\order_{i-1}}}  \Big) \max\limits_{k \in \mathcal K_\ell}   \lambda_i(\mathcal K_\ell,k,d_k)    \bigg], \label{eq: avg first step}
\end{align}
%The right hand side term in \eqref{eq:avg main} can be upper bounded as follows:
%In \eqref{eq: avg first step}, 
where the expectation is taken over all subsets $\mathcal K_\ell$ of the set $\{\order_i,\dots,\order_K \}$ which is a function of the random demand realization $\dbf$, and therefore, {the order of the expectation and summation can not be exchanged}. Consequently, we upper bound  \eqref{eq: avg first step} with the delivery rate in a network where  receiver $k\in[K]$ requests equal-length versions of all file in the library each of size $ \length^*_{k} \tau$ bits with $\length^*_{k} \triangleq\max_{n\in[N]} \length_{k,n} $, i.e., it requests versions of files with the largest rate. For a given set of $\length^*_{1}, \dots, \length^*_{K}$, let the ordered set $\order_1^*,\dots,\order_K^*$ denote a permutation of receiver indices $\{1,\dots,K \}$ such that $\length_{\order_1^*}^*\leq\dots\leq\Omega_{\order_K^*}^*$. Note that the set $\order_1^*,\dots,\order_K^*$ is independent of the random demand $\dbf$.    Then, from \eqref{eq: avg first step}  we have
\begin{align}
\mathbb E \Big[ &\Psi_{\dbf}^{(1)}\Big(  \{\mu_k \},  \{\pbf_{k} \}, \{\length_{k,n}\} \Big)   \Big]  \leq 
\sum_{i=1}^{K} \sum_{\ell=1}^{K-i+1} \mathbb E \bigg[\sum_{\mathcal K_{\ell} \subseteq \{\order_{i}^*,\dots,\order_K^*  \}} \Big(\length_{\order_i^*}^* -  \length_{\order_{i-1}^*}^*  \Big) \max\limits_{k \in \mathcal K_\ell}   \lambda_i(\mathcal K_\ell,k,d_k)    \bigg] \notag\\
&\stackrel{(a)}{=}   \sum_{i=1}^{K} \sum_{\ell=1}^{K-i+1}  \sum_{ \dbf \in {\mathfrak D}}  \Big(\prod_{k \in [K]} q_{k,d_k}\Big)\sum_{\mathcal K_{\ell} \subseteq \{\order_{i}^*,\dots,\order_K^*  \}} \Big(\length_{\order_i^*}^* -  \length_{\order_{i-1}^*}^*  \Big) \max\limits_{k \in \mathcal K_\ell}   \lambda_i(\mathcal K_\ell,k,d_k)      \notag\\
&\stackrel{(b)}{=}   \sum_{i=1}^{K} \sum_{\ell=1}^{K-i+1}  \sum_{ \dbf \in {\mathfrak D}}  \Big(\prod_{k \in { [K]} } q_{k,d_k}\Big)\sum_{\mathcal K_{\ell} \subseteq \{\order_{i}^*,\dots,\order_K^*  \}} \Big(\length_{\order_i^*}^* -  \length_{\order_{i-1}^*}^*  \Big)  \notag\\
&\qquad \qquad\bigg(   \sum_{n =1}^{N}   \sum_{k \in {\mathcal K}_\ell  } \mathbbm{1} \Big\{ (k,n)= \argmax\limits_{( s,t): s\in\mathcal K_\ell, t = d_s  }\,   \,\lambda_i(\mathcal K_\ell,s,t)   \Big\} \;.\;  \lambda_i(\mathcal K_\ell,k,  n)    \bigg)   \notag\\
%&=\sum_{i=1}^{K} \sum_{\ell=1}^{K-i+1}  \sum_{\mathcal K_{\ell} \subseteq \{\order_{i}^*,\dots,\order_K^*  \}} \sum_{n =1}^{N}   \sum_{k \in {\mathcal K}_\ell  } \Big(\length_{\order_i^*}^* -  \length_{\order_{i-1}^*}^*  \Big) \lambda_i(\mathcal K_\ell,k,  n)  \notag\\
%&\qquad \qquad \sum_{ \fbf \in {\mathcal F} ({  \mathcal K_\ell}) }  \Big(\prod_{k \in {  \mathcal K_\ell}} q_{k,f_k}\Big)  \mathbbm{1} \Big\{ (k,n)= \argmax\limits_{( s,t): s\in\mathcal K_\ell, t = f_s  }\,   \,\lambda_i(\mathcal K_\ell,s,t)   \Big\}    \notag\\
&\stackrel{(c)}{=}\sum_{i=1}^{K} \sum_{\ell=1}^{K-i+1}  \sum_{\mathcal K_{\ell} \subseteq \{\order_{i}^*,\dots,\order_K^*  \}} \sum_{n =1}^{N}   \sum_{k \in {\mathcal K}_\ell  }
\Big(\length_{\order_i^*}^* -  \length_{\order_{i-1}^*}^*  \Big) \lambda_i(\mathcal K_\ell,k,  n)  \, \mathbb E\Big[ 
\mathbbm{1} \Big\{ (k,n)= \argmax\limits_{( s,t): s\in\mathcal K_\ell, t = f_s  }\,   \,\lambda_i(\mathcal K_\ell,s,t)   \Big\}  \Big]  \notag\\
%&=\sum_{i=1}^{K} \sum_{\ell=1}^{K-i+1}  \sum_{\mathcal K_{\ell} \subseteq \{\order_{i}^*,\dots,\order_K^*  \}} \sum_{n =1}^{N}   \sum_{k \in {\mathcal K}_\ell  }   \Big(\length_{\order_i^*}^* -  \length_{\order_{i-1}^*}^*  \Big) \lambda_i(\mathcal K_\ell,k,  n)  \,\mathbb P\Big(  (k,n)= \argmax\limits_{( s,t): s\in\mathcal K_\ell, t = f_s }\,   \,\lambda_i(\mathcal K_\ell,s,t)  \Big)  \notag\\
&\stackrel{(d)}{=}\sum_{i=1}^{K} \sum_{\ell=1}^{K-i+1}  \sum_{\mathcal K_{\ell} \subseteq \{\order_{i}^*,\dots,\order_K^*  \}} \sum_{n =1}^{N}   \sum_{k \in {\mathcal K}_\ell  }
\Big(\length_{\order_i^*}^* -  \length_{\order_{i-1}^*}^*  \Big) \lambda_i(\mathcal K_\ell,k,  n)  \,\Gamma_i(\mathcal K_\ell,k,n) ,\label{eq:exp}
\end{align}
where $(a)$ follows by writing the expectation with
respect to the demand vector $\dbf\in{\mathfrak D}$. %and ${\mathcal F}(\mathcal K)$ denotes the  set of all possible demand vectors made by the receivers in set $\mathcal K$. Then, 
Then $(b)$ follows from
replacing $\max_{k\in\mathcal K_\ell}\lambda_i(\mathcal K_\ell,k,  d_k)$ with a sum over all possible file-receiver indices $(k,n)$ of $\lambda_i(\mathcal K_\ell,k,  n)$ multiplied by the indicator function that picks the maximum value, and $(c)$ follows since only the indicator function depends on the demand, and $(d)$ follows by denoting
\begin{align}
\Gamma_i({\mathcal K_\ell,k,n}) \triangleq \mathbb P\Big( (k,n)= \argmax\limits_{ ( s,t): s\in\mathcal K_\ell, t = f_s }\,\lambda_i(\mathcal K_\ell,s,t)     \Big),
\end{align}
which is the probability that file $n\ni \fbf$ requested by receiver $k\in\mathcal K_\ell$ maximizes  the quantity $\lambda_i(\mathcal K_\ell,s,t) $, and where $\sum_{n=1}^{N}\sum_{k \in \mathcal K_{\ell}}   \Gamma_i({\mathcal K_\ell,k,n})  =1$. 
Therefore, the expected coded multicast rate achieved by GGC$_1$ for a given set of caching distributions $\{\pbf_k  \}_{k=1}^K$ is upper bounded by

\scalebox{0.9}{\parbox{\linewidth}{%
		\begin{align}
		{\bar \Psi}^{(1)}\Big( \{\qbf_{k} \},  \{\mu_k  \} ,\{\pbf_{k} \}, \{\length_{k,n}\}\Big) & \triangleq 
		\sum_{i=1}^{K} \sum_{\ell=1}^{K-i+1}  \sum_{n =1}^{N}\sum_{\mathcal K_{\ell} \subseteq \{\order_{i}^*,\dots,\order_K^*  \}}    \sum_{k \in {\mathcal K}_\ell  }
		\Big(\length_{\order_i^*}^* -  \length_{\order_{i-1}^*}^*  \Big) \lambda_i(\mathcal K_\ell,k,  n)  \,\Gamma_i(\mathcal K_\ell,k,n).
		\notag
		\end{align}
}}

\subsection{Expected coded multicast rate achieved by  GCC$_2$}
Taking the expectation of the rate given in \eqref{eq:GCC2 app1} over all demand realizations $\dbf\in\mathfrak D$ results in
\begin{align}
{\bar \Psi}^{(2)} \Big(  \{\qbf_{k} \}, \{\Omega_{k,n}\}  \Big)\triangleq
\mathbb E \Big[ \Psi_{\dbf}^{(2)}\Big( \{\Omega_{k,n}\}\Big)   \Big]  
%&=\mathbb E \Big[\sum_{n =1}^N  \mathbbm{1} \{n \ni \dbf\} \max\limits_{k:d_k = n}  \;\Omega_{k,n}   \Big] \notag\\
&=\sum_{n=1}^N  \mathbb E\Big[\mathbbm{1}\{n\ni \dbf\} \Big(\max\limits_{k:d_k=n}  \Omega_{k,n}  -  \min\limits_{k:d_k=n}  p_{k,n}\mu_k \Big) \Big] \notag\\
&\stackrel{(a)}{\leq}    \sum_{n=1}^N \mathbb P\Big( \mathbbm{1} \{n \ni \dbf\}\Big)  \Big(\max\limits_{k\in[K]}   \length_{k,n}  -  \min\limits_{k\in[K]}  p_{k,n}\mu_k \Big) \notag\\
%&= \sum_{n=1}^N \max\limits_{k} \{\omega_{k,n}b\} \Big(1-\mathbb P(1\{n \notin \dbf\})\Big) \notag\\
%&= \sum_{n=1}^N \max\limits_{k} \{\omega_{k,n}b\} \Big(1-\prod_{k=1}^K (1-q_{k,n})\Big) \notag \\
&= \sum_{n=1}^N \Big(1-\prod_{k=1}^K (1-q_{k,n})\Big)\Big(\max\limits_{k\in[K]}   \length_{k,n}  -  \min\limits_{k\in[K]}  p_{k,n}\mu_k \Big)  \notag ,
\end{align}
where   $(a)$ follows since  $\max\limits_{k:d_k=n}  \Omega_{k,n} \leq \max\limits_{k\in[K]}   \Omega_{k,n}$ and  $\min\limits_{k:d_k=n}  p_{k,n}\mu_k \geq \min\limits_{k\in[K]}   p_{k,n}\mu_k$ for any  $\dbf\in\mathfrak D$.
%

% ********************************************************************************************************
% Proof Theorem 3
% ********************************************************************************************************
\section{Proof of Theorem \ref{thm:Sym user}}\label{app:Sym user}
The proof follows steps similar to those in \cite[Appendix A]{ji15order}, and based on the explanations given in Appendix \ref{App:demand}. We upper bound the  number of independent sets in ${\mathcal H}_{{\bf C}, {\bf Q}}$, by splitting the graph into $N$ subgraphs such that subgraph $i$ contains a subset of the packets of all requested files that have version length equal or larger than the $i^\text{th}$ shortest version length. Let $J_i(\Cbf,\Qbf)$  denote the number of independent sets   found by Algorithm GCC$_1$ in subgraph $i$. For a given $\Cbf$ and $\Qbf$,  we upper bound the delivery rate with  $\sum_{i}J_i(\Cbf,\Qbf)$. Let the ordered set $\zeta_1,\dots,\zeta_N$ denote a permutation of the file indices $\{1,\dots,N \}$ such that $\Omega_{\zeta_1}\leq\dots\leq\Omega_{\zeta_N}$. Then, for a given demand, subgraph $i$ is composed of all (if any)  vertices in $\mathcal V$,  denoted by $\mathcal V^{(i)}$, corresponding to packets in $\bf Q$ that belong to the portion of requested files from bit $\Omega_{\zeta_{i-1}}\tau$ to bit $\Omega_{\zeta_i}\tau$. %Therefore, for each demand $\dbf$, subgraph $i$ only contains packets demanded by receivers requesting a file with version length lager than $\Omega_{\zeta_i}\tau$ bits.   
Let us denote the set of receivers requesting a packet in subgraph $i$ by ${\mathcal K}^{(i)}$. Following the procedure in Appendix A, the normalized number of independent sets generated by the algorithm becomes
\begin{align}
\frac{1}{\tau/T} \mathbb E_{\bf C}\Big[ \sum_{i=1}^N\mathcal J_i(\Cbf,\Qbf) \Big]&= \sum_{i=1}^N \sum_{\ell=1}^{|{\mathcal K}^{(i)}|}\sum_{{\mathcal K}_{\ell} \in {\mathcal K}^{(i)}} \Big( \Omega_{\zeta_{i}} - \Omega_{\zeta_{i-1}}  \Big) \max\limits_{n\in{\bf f}_{\ell}}  \sum\limits_{ \substack{v \in \Vc^{(i)}:\\ \alpha(v)\text{ belongs to } n} } \mathbbm{1}  \Big\{\eta_i(v) = \mathcal K_{\ell}\setminus \{k\}  \Big\}\notag\\
&= \sum_{i=1}^N \sum_{\ell=1}^{ |{\mathcal K}^{(i)}|} \binom{|{\mathcal K}^{(i)}|}{\ell}  \Big( \Omega_{\zeta_{i}} - \Omega_{\zeta_{i-1}}  \Big) \max\limits_{n\in{\bf f}_{\ell}}  \lambda(|{\mathcal K}^{(i)}|,\ell, n) , \notag
\end{align}
which follows due to the homogeneity across receivers, with $\lambda(K,\ell, n) $ given as
\begin{align}
\lambda(K,\ell, n) =  (   p_n^c  )^{\ell-1} ( 1-p_n^c   )^{   K-\ell+1} .
\end{align} 
The expected delivery rate can be upper bounded by taking the expectation over all demands as:
\begin{align}
{\bar \Psi}^{(1)}\Big(\qbf,\mu,\pbf,\{{\length}_{n}\}    \Big)   &\leq \mathbb E\bigg[ \sum_{i=1}^N \sum_{\ell=1}^{ |{\mathcal K}^{(i)}|} \binom{|{\mathcal K}^{(i)}|}{\ell}  \Big( \Omega_{\zeta_{i}} - \Omega_{\zeta_{i-1}}  \Big) \max\limits_{n\in{\bf f}_{\ell}}   \lambda(|{\mathcal K}^{(i)}|,\ell, n) \bigg]\notag\\
& =  \sum_{i=1}^N\Big( \Omega_{\zeta_{i}} - \Omega_{\zeta_{i-1}}  \Big) \mathbb E\bigg[ \sum_{\ell=1}^{ |{\mathcal K}^{(i)}|} \binom{|{\mathcal K}^{(i)}|}{\ell}   \max\limits_{n\in{\bf f}_{\ell}}  \lambda(|{\mathcal K}^{(i)}|,\ell, n) \bigg]\notag\\
& \stackrel{(a)}{\leq}  \sum_{i=1}^N\Big( \Omega_{\zeta_{i}} - \Omega_{\zeta_{i-1}}  \Big) \mathbb E\bigg[ \sum_{\ell=1}^{ |{\mathcal K}^{(i)}|} \binom{|{\mathcal K}^{(i)}|}{\ell}   \sum_{  n\in \{\zeta_i,\dots, \zeta_N \}}\Gamma_i(|{\mathcal K}^{(i)}|,\ell, n) \lambda(|{\mathcal K}^{(i)}|,\ell, n) \bigg]\notag\\
& \stackrel{(b)}{\leq}  \sum_{i=1}^N\Big( \Omega_{\zeta_{i}} - \Omega_{\zeta_{i-1}}  \Big)  \sum_{\ell=1}^{ {\widetilde K}_i } \binom{{\widetilde K}_i }{\ell}   \Gamma_i({\widetilde K}_i , \ell, n)   \lambda({\widetilde K}_i,\ell, n)  \notag,
\end{align}
where $(a)$ follows using the same trick as in Appendix B with 
\begin{align}
\Gamma_i(K,{\ell,n} ) = \mathbb P\Big( n= \argmax\limits_{t\in\mathcal F_\ell} \;\; (   p_t^c  )^{\ell-1} ( 1-p_t^c   )^{K-\ell+1}    \Big),
\end{align}
denoting the probability that file $n\in{\mathcal F}_\ell$ chosen from a random set of $\ell$ files in $\{\zeta_i,\dots, \zeta_N \}$ maximizes $\lambda(K,{\ell,n} )$,   and $(b)$  follows from Jensen's inequality due to the concavity of the function over which the expectation is taken.  $  {\widetilde K}_i = K \sum_{j=i}^N q_{\zeta_j}$ denoting the expected number of receivers in set ${\mathcal K}^{(i)}$.

\end{appendices}

% ********************************************************************************************************
% BIBLIOGRAPHY
% ********************************************************************************************************

\bibliographystyle{IEEEtran}
\bibliography{References2}

\end{document}